\newcommand{\AutoAdjust}[3]{\mathchoice{ \left #1 #2  \right #3}{#1 #2 #3}{#1 #2 #3}{#1 #2 #3} }
\newcommand{\Xcomment}[1]{{}}
\newcommand{\InBrackets}[1]{\AutoAdjust{[}{#1}{]}}
\newcommand{\Ex}[2][]{\operatorname{\mathbf E}_{#1}\InBrackets{#2\vphantom{E_{F}}}}
\newcommand{\Exlong}[2][]{\operatornamewithlimits{\mathbf E}\limits_{#1}\InBrackets{#2\vphantom{\operatornamewithlimits{\mathbf E}\limits_{#1}}}}
\newcommand{\ExlongOpen}[2][]{\operatornamewithlimits{\mathbf E}\limits_{#1}\left[#2\vphantom{\operatornamewithlimits{\mathbf E}\limits_{#1}}\right .}
\newcommand{\ExlongClose}[1][]{\left. \vphantom{\operatornamewithlimits{\mathbf E}\limits_{t}#1}\right]}
\newcommand{\Prx}[2][]{\operatorname{\mathbf{Pr}}_{#1}\InBrackets{#2}}
\def\prob{\Prx}
\newcommand{\dd}{\mathrm{d}}  
\newcommand{\be}{\begin{equation}}
\newcommand{\ee}{\end{equation}}
\newcommand{\argmax}{\mathop{\rm argmax}}
\newcommand{\eps}{\varepsilon}
\newcommand{\vect}[1]{\ensuremath{\mathbf{#1}}}
\newcommand{\dist}{\ensuremath{F}}
\newcommand{\disti}[1][i]{\ensuremath{F_{#1}}}
\newcommand{\dists}{\vect{\dist}}
\newcommand{\bid}{b}
\newcommand{\bids}{\vect{\bid}}
\newcommand{\bidsmi}[1][i]{\bids_{\text{-}#1}}
\newcommand{\bidi}[1][i]{{\bid_{#1}}}
\newcommand{\val}{v}
\newcommand{\vals}{\vect{\val}}
\newcommand{\valsmi}[1][i]{\vals_{\text{-}#1}}
\newcommand{\vali}[1][i]{{\val_{#1}}}
\newcommand{\util}{u}
\newcommand{\utili}[1][i]{\util_{#1}}
\newcommand{\phii}[1][i]{{\varphi}_{#1}}
\newcommand{\phipi}[1][i]{{\varphi}_{#1}^{+}}
\newcommand{\price}{p}
\newcommand{\prices}{\vect{\price}}
\newcommand{\pricei}[1][i]{{\price_{#1}}}
\newcommand{\alloc}{x}
\newcommand{\allocs}{\vect{\alloc}}
\newcommand{\alloci}[1][i]{{\alloc_{#1}}}
\newcommand{\allocopti}[1][i]{{\alloc^{*}_{#1}}}
\newcommand{\allocopts}{\allocs^*}
\newcommand{\opt}{\text{OPT}}
\newcommand{\ri}[1][]{\bar{R}_{#1}}
\newcommand{\lap}{\textsf{LAP}}
\newcommand{\la}{\textsf{LA}}
\newcommand{\upc}{{c^{+}}}
\newcommand{\lowc}{{c^{o}}}
\newenvironment{proofof}[1]{{\vspace*{5pt} \noindent\bf Proof of #1:  }}{\hfill\rule{2mm}{2mm}\vspace*{5pt}}
\newcommand{\eqdef}{\stackrel{\textrm{def}}{=}}
\newcommand{\rev}{\textsf{Rev}}
\newcommand{\exante}{\textsf{ex-ante}}
\title{Lookahead Auctions with Pooling}
\author{}
\date{}
\begin{document}

\title{Lookahead Auctions with Pooling}
%
%
\author{Michal Feldman\inst{1,3} \and
Nick Gravin\inst{2}
\and
Zhihao Gavin Tang\inst{2}\and
Almog Wald\inst{1}}
\authorrunning{Feldman et al.}
%
\institute{Tel Aviv University\\ \email{michal.feldman@cs.tau.ac.il, almog.wald@gmail.com}\\ \and
ITCS, Shanghai University of Finance and Economics 
\email{\{nikolai,tang.zhihao\}@mail.shufe.edu.cn}\\ 
\and
Microsoft Research}
\maketitle              

\begin{abstract}
A Lookahead Auction (\textsf{LA}), introduced by Ronen, is an auction format for the sale of a single item among multiple buyers, which is considered simpler and more fair than the optimal auction. 
Indeed, it anonymously selects a provisional winner by a symmetric ascending-price process, and only then uses a personalized posted price.
A \textsf{LA} auction extracts at least $1/2$ of the optimal revenue, even under a correlated value distribution.
This bound is tight, even for $2$ buyers with independent values.
We introduce a natural extension of \textsf{LA}, called {\em  lookahead with pooling} (\textsf{LAP}). 
A \textsf{LAP} auction proceeds as \textsf{LA}, with one difference: it allows the seller to \emph{pool} together a range of values during the ascending-price stage, and treat them the same; thus, it preserves the simplicity and fairness of \textsf{LA}.
Our main result is that this simple pooling operation improves the revenue guarantees for independent buyers from $1/2$ to $4/7$ of the optimal revenue. 
We also give a complementary negative result, showing that for arbitrary correlated priors \textsf{LAP} cannot do better than $1/2$ approximation.

\keywords{Auction Design \and Revenue Maximization \and Lookahead Auctions.}
\end{abstract}

\section{Introduction}
\label{sec:introduction}
Optimal auction design is important both theoretically and practically, and has been extensively studied in both economics and computer science over the last few decades~\cite{HartlineBook}. 
The scenario of a single item auction is the most fundamental setting, and serves as the basis for the design and analysis of auctions \cite{Myerson81}. 

In a single item auction, an item is sold to one of $n$ bidders. Each bidder has a value $\vali$ drawn from an underlying distribution $\disti$. The value
$\vali$ is $i$'s private information, while $\disti$ is known to all. 
A common way to sell items is through auctions. An auction receives as input the bidder values, and determines an allocation rule (who gets the item) and a payment rule (how much each bidder pays) based on the reported values. 
The utility of a bidder is her value for the item (if she wins) minus the payment she makes.

An auction is {\em dominant-strategy incentive compatible} (DSIC) if it is in the best interest of every bidder to bid her true value $\vali$, for any value profile of the other bidders.
An auction is {\em individually rational} (IR) if the utility of all bidders is non-negative. 
The seller's {\em revenue} is the payment collected by the auction. 
An auction is said to be optimal if it maximizes the expected revenue among all DSIC and IR mechanisms.

Myerson \cite{Myerson81} gave a full characterization of the revenue-optimal auction in the case where agent values are independent.
Specifically, he showed that an auction is DSIC if and only if its allocation rule is monotone (i.e., the allocation of bidder $i$ is non-decreasing in her value), and the payment is then determined uniquely (up to normalization) by the allocation rule. Given value $\vali$ drawn from $\disti$, the {\em virtual value} of bidder $i$ is defined as $\varphi_i(\vali)=\vali-\frac{1-\disti(v_i)}{f_i(v_i)}$, where $f_i$ is the derivative of $\disti$. 
The main observation of \cite{Myerson81} is that maximizing the expected revenue is equivalent to maximizing the {\em virtual welfare}, i.e., the sum of virtual values.


The simplest case is one where bidder values are identically and independently distributed according to a {\em regular} distribution, meaning that the virtual value is monotonically non-decreasing in the value. 
In this case, the optimal auction is essentially a second-price auction with a reserve price.
However, as we move slightly beyond this scenario, the optimal auction becomes less intuitive and less natural. 
For example, if valuations are non-identically distributed, then the winner may not necessarily be the bidder who placed the highest bid. 
Moreover, for non-regular distributions, some ironing process takes place, which further complicates the auction from a bidder's perspective.

Thus, the optimal auction suffers from some undesirable properties, including asymmetry among bidders and lack of simplicity, which makes it hard to apply in practice.
Indeed, even in the simplest case of a single-item auction, practical applications tend to prefer simpler and more natural auction formats over the optimal auction. 
This observation has given rise to a body of literature known as ``simple versus optimal auctions" \cite{HartlineR09} that studies the trade offs between simplicity and optimality in auction design. 
The goal is to design simple auctions that provide good approximation guarantees to the optimal revenue. 

    
\paragraph{Lookahead auctions.}
The {\em lookahead} (\la) auction is a simple auction format that has been introduced by Ronen \cite{Ronen01}. 
Ronen showed that the \la\ auction format gives at least a half of the optimal revenue, even if agent values are distributed according to a correlated joint distribution (i.e., where distributions are not independent, in contrast to the setting studied by Myerson). 

In a \la\ auction, the item can only go to the highest-value bidder $i^*$, but this bidder does not always win the item. Instead, she is offered the item for a price that equals the revenue-optimal price for the distribution $\disti[i^*]$, given the bids of all other bidders and the fact that $i^*$ has the highest value, and buys it if her value exceeds the price.

The \la\ auction can equivalently be described as follows: increase the price continuously until a single agent remains. Then, sell the item to the remaining bidder at a price that depends only on the non-top bidders. 
This is a natural ``ascending price" process, which is common in practical auctions.
One can easily observe that this auction is DSIC. Indeed, non-top bidders are never allocated, thus have no incentive to lie; and the top bidder is offered a price that does not depend on her bid, thus has no incentive to lie either.
Clearly, it is also IR since the provisional winner buys the item only if her value exceeds the price.

Moreover, unlike the optimal auction, which may treat different bidders very differently, the \la\ auction may be perceived as more fair. Indeed, the process of identifying the provisional winner is symmetric; differential treatment is applied only for that bidder and only for determining the price. 
See \cite{DebPai17} for further discussion on symmetry and discrimination in auctions. 


The $1/2$ approximation provided by the \la\ auction format is tight, even for independent values. Namely, there exists an instance with two independently distributed bidders where no \la\ auction gives more than half of the optimal revenue. 
Improvements are possible by a variant of the \la\ auction called the $k$-\la\ auction \cite{Ronen01,ChenHLW11,DobzinskiFK11}, which finds the optimal revenue that can be obtained by the top $k$ bidders (see more on this in Section~\ref{sec:related}). 

A clear advantage of the \la\ auction is its simplicity. 
As it turns out, however, \la\ entails a revenue loss that may account for up to $50\%$ of the optimal revenue.
The question that leads us in this paper is whether there is a different way to trade off simplicity for optimality, in a way that would give better revenue guarantees. 

To this end, we introduce a variant of the \la\ auction, which we call {\em lookahead with pooling} (\lap).
\lap\ is essentially a \la\ auction with the option of ``pooling" some types together. 
In particular, we allow the mechanism to pool together an interval of values and treat them as the same type. I.e., the mechanism is allowed to allocate the item to any of the bidders within the same interval, similar to ironing in the Myerson's auction when it is applied to a single bidder.
\lap\ preserves many of the merits of \la, including simplicity and fairness, while providing better approximation guarantees. 
In particular, while \la\ cannot give a better approximation than $1/2$, even for a simple scenario with two independent bidders, \lap\ gives a better approximation for any number of independent bidders. 

The following example demonstrates how pooling works, and gives some intuition as to how it increases the revenue compared to \la.

\begin{example}
	Consider a setting with 2 bidders, where $\vali[1]=1$, and 
	$$
	\vali[2]=
	\begin{cases}
	1+\eps, & \mbox{ with probability } 1-\eps\\
	1/\eps, & \mbox{ otherwise}\\
	\end{cases}
	$$
	The optimal revenue is 2, which is obtained by offering a price of $1/\eps$ to bidder 2, and offering a price of 1 to bidder 1 in case bidder 2 rejects.
	The maximum revenue that can be obtained by \la\ is $1+\eps$, since \la\ always sells to bidder 2 and $\val \cdot \prob{\vali[2] \geq \val} \leq 1 + \eps$ for every $\val$.
	However, \lap\ can do better by pooling together the interval $[1,1/\eps]$.
	I.e., we do not differentiate different values within the interval of $[1,1/\eps]$. One can also interpret it as a discrete jump of the ``ascending price'' from $1$ to $1/\eps$.
	By doing so, the first bidder must drop after the price jump. If the second bidder also drops, our mechanism sells the item at a price of $1$ to a random uniformly chosen  bidder; if the second bidder survives, our mechanism sells the item at a price of $\frac{1+\eps}{2\eps}$. The prices are set to guarantee the truthfulness of our mechanism. See Section~\ref{sec:prelim} for a more detailed discussion. In summary, the seller's revenue is 1 with probability $1-\eps$ and  $\frac{1+\eps}{2\eps}$ with probability $\eps$, giving a revenue of $1.5-\eps/2$.
	One can verify that this is the optimal \lap.
	This example also shows that \lap\ cannot give a better approximation than $3/4$, even for settings with two independent bidders.
\end{example}

\subsection{Our Results}

Our main result is an improved bound on the approximation ratio of \lap\ auctions for the case of independent bidders.

\vspace{0.1in}
\noindent {\bf Theorem:}
For any setting with an arbitrary number of independent bidders, \lap\ achieves a $\frac{4}{7}$-approximation to the optimal revenue. 
\vspace{0.1in}

Interestingly, the \lap\ auction that provides this guarantee is a special case of $2$-\la\ (i.e., $k$-\la\ for $k=2$), which exhibits more symmetry than a general $2$-\la.
(In general, \lap\ may not be $k$-\la, it just so happens that the \lap\ we use adheres to a $2$-\la\ format.)
We complement our main result with a negative result, showing that for correlated values,  \lap\ does not improve over the approximation ratio achievable by \la, namely $1/2$.

\vspace{0.1in}
\noindent {\bf Theorem:}
There exists an instance with two correlated bidders, where no \lap\ mechanism achieves better than $\frac{1}{2}+o(1)$ approximation to the optimal revenue.

\subsection{Related Work}
\label{sec:related}
Dobzinski, Fu and Kleinberg~\cite{DobzinskiFK11} extended the lookahead auction to $k$-lookahead auctions for correlated bidders, and proved an approximation ratio of $\frac{2k-1}{3k-1}$. Subsequently, Chen, Hu, Lu and Wang~\cite{ChenHLW11} improved the approximation ratio to $\frac{e^{1-1/k}}{e^{1-1/k}+1}$. Dobzinski and Uziely~\cite{DobzinskiU18} showed that any DSIC single-item auction that is forbidden to allocate item to the lowest bidder has a constant factor gap $1-\frac{1}{e}$ compared to the optimal auction, even when the number of bidders $n$ goes to infinity. Their result implies that $k$-lookahead auctions with $k\to\infty$ does not approach the revenue of the optimal auction. Bei, Gravin, Lu and Tang~\cite{BeiGLT19}
considered single-item auction in the correlation robust framework. Among other auction formats, they study lookahead auctions and obtain computational and approximation results in the correlation-robust framework.

While single-item auctions with independent values are well understood, scenarios with correlated values are much less understood. 
Cremer and McLean \cite{CremeMcLean} introduce a mechanism that extracts full surplus in revenue for settings with a correlated joint distribution. However, this auction is considered highly non-practical, and satisfies only {\em interim} IR, as opposed to ex-post IR, meaning that IR holds only in expectation over others' values. 

Papadimitriou and Pierrakos \cite{PapadimitriouP11} study the computational complexity of optimal deterministic single-item auctions with a general joint prior distribution. They provide an inapproximability result for $n\ge 3$ bidders (and an efficient auction for 2 bidders). This is another indication of the complexity of correlated distributions. 

\section{Preliminaries}
\label{sec:prelim}
A set $N$ of $n$ potential buyers participate in a \emph{single item} auction. Each buyer $i\in N$ has a private value $\vali$ for obtaining the item. By the revelation principle, we restrict our attention to truthful sealed bid auctions: that is, each bidder $i$ submits a sealed bid $\bidi$ to the auctioneer who upon receiving all bids $\bids=(\bidi[1],\ldots,\bidi[n])$ decides on the allocation $\allocs(\bids)=(\alloci[1],\ldots,\alloci[n])$ and payments $\prices(\bids)=(\pricei[1],\ldots,\pricei[n])$. Each allocation function $\alloci(\bids)\in[0,1]$ indicated the probability that bidder $i$ gets the item and since at most one bidder can win the item, we have $\sum_{i\in N}\alloci(\bids)\le 1$. 

Throughout most of the paper, we assume that all values $\vali$ are \emph{independently} drawn from \emph{known prior} distributions $\vali\sim\disti$ and $\vals=(\vali[1],\ldots,\vali[n])\sim\dists=\disti[1]\times\ldots\times\disti[n]$. We assume that each $\disti$ is a continuous distribution supported on a bounded interval $[0,B]$.
We use $\disti$ to refer to the cumulative distribution function (CDF) of the distribution $\disti(t)=\prob[\vali]{\vali\le t}$ and use $f_i(t)$ to denote its probability density function at $t$.
We will also consider the case of interdependent prior $\dists$ in which case we write $\vals\sim\dists$.

We are interested in dominant-strategy incentive compatible (DSIC) and individually rational (IR) mechanisms. As standard, we assume that bidders maximize their \emph{quasi-linear} utility $\utili(\vali,\bids)=\vali\cdot \alloci(\bids)-\pricei(\bids)$. 
A mechanism is DSIC if for every bidder $i$, valuation profile $\vals$ and a deviation $\vali'$ of bidder $i$  
\begin{multline*}
\utili(\vali,(\valsmi,\vali))=\vali\cdot\alloci(\vals)-\pricei(\vals)\ge \utili(\vali,(\valsmi,\vali'))= \\
\vali\cdot\alloci(\valsmi,\vali')-\pricei(\valsmi,\vali'),\quad\quad
\text{(DSIC)}
\end{multline*}
where we use standard notation 
$\valsmi=(\vali[1],\ldots,\vali[i-1],\vali[i+1],\ldots,\vali[n])$ and we similarly denote $\bidsmi=(\bidi[1],\ldots,\bidi[i-1],\bidi[i+1],\ldots,\bidi[n])$.
A mechanism is IR if for every bidder $i$ and valuation profile $\vals$,
\begin{align*}
\utili(\vali,\vals)=\vali\cdot\alloci(\vals)-\pricei(\vals)\ge 0 && \text{(IR)}
\end{align*}

Our goal is to maximize the expected revenue $\rev\eqdef\Ex[\vals]{\sum_{i\in N}\pricei(\vals)}$ of the seller (auctioneer) over truthful mechanisms (DSIC $+$ IR).  

\paragraph{Myerson's Lemma.} Myerson~\cite{Myerson81} characterized DSIC mechanisms and described revenue maximizing auction for independent prior distribution\footnote{Myerson's results hold even for a weaker requirement of Bayesian Incentive Compatibility (BIC) for the agents when the prior distributions are independent.} $\dists$: 
\begin{itemize}
\item A mechanism $(\allocs,\prices)$ is DSIC if and only if
\begin{enumerate}[(a)]
    \item $\alloci(\vali,\valsmi)$ is monotonically non-decreasing in $\vali$ for each bidder $i$ and valuations $\valsmi$.
    \item The payment is fully determined by the allocation rule, and is given by 
$$
\pricei(\vals) = \vali\cdot\alloci(\vals) - \int_{0}^{\vali} z\cdot \alloci(z,\valsmi)\;\dd z.
$$
\end{enumerate}
\item The optimal mechanism selects the bidder with maximal non-negative virtual value $\phii(\vali)$ and the expected revenue is
$$
\opt=\Ex[\vals]{\max_{i\in N}(\phipi(\vali))},
$$
where $\phipi(\vali)\eqdef\max\{\phii(\vali),0\}$ and $\phii(\vali)\eqdef\vali-\frac{1-\disti(\vali)}{f_i(\vali)}$ is the virtual value for \emph{regular} distribution $\disti$ (for irregular distribution $\disti$ an ironed virtual value is used instead).
\end{itemize}

A distribution $F$ is called {\em regular} if the virtual value function $\phii(\vali)$ is non-decreasing in $\vali$.
For the case of irregular distribution $F$, Myerson defined an \emph{ironing} procedure on the  \emph{revenue curves} that allows to transform irregular distribution into a regular one.

\paragraph{Revenue curves \& Ironing.} 
Consider a bidder with value drawn from distribution $F$.
Given a price $p$, the buyer buys the item with probability $1-F(p)$, leading to an expected revenue of $p\cdot(1-F(p))$. 
Let $q(p)\eqdef 1-F(p)$ denote the selling probability at price $p$, and call it the {\em quantile} of $p$. It is convenient to plot the revenue in quantile space, i.e., as a function of the quantile $q \in [0,1]$. We get $R(q)= q \cdot F^{-1}(1-q)$ which is called the {\em revenue curve} of $F$. In other words, $R(q)$ is the revenue of a selling strategy that sells with ex-ante probability $q$. Without loss of generality, one may assume that $R(0)=R(1)=0$. The corresponding value of a quantile $q$ is $v(q)=F^{-1}(1-q)$ and the derivative of $R(q)$ at $q$ is the virtual value $\varphi(v)=v-\frac{1-F(v)}{f(v)}$ of $v$.
Thus, for regular distributions, the revenue curve is concave. For irregular distributions we additionally apply ironing procedure to the revenue curve.

One way to achieve an ex-ante selling probability of $q$ is to set a price $p=F^{-1}(1-q)$. 
However, achieving an ex-ante selling probability $q$ is possible also by randomizing over  different prices. 
For example, we may randomize between two quantiles $q_1$ and $q_2$, choosing the price corresponding to $q_1$ with probability $\alpha$, and the price corresponding to $q_2$ with probability $1-\alpha$. Choosing $q_1,q_2,\alpha$ such that $\alpha\cdot q_1+(1-\alpha)\cdot q_2=q$ also gives quantile $q$.
The revenue obtained by this randomized pricing is $\alpha \cdot R(q_1)+(1-\alpha) \cdot R(q_2)$.

If $F$ is a regular distribution, then the revenue curve $R(q)$ is concave, and the maximum possible revenue for an ex-ante selling probability $q$ is achieved by a deterministic price of  $p=F^{-1}(1-q)$. 

If, however, $F$ is irregular, then $R(q)$ is not a concave function. 
Let $\bar{R}(q)$ denote the maximum revenue achievable for a quantile $q$ (possibly by a random pricing). 
$\bar{R}(q)$ corresponds to the concave envelope of the revenue curve $R(q)=q\cdot F^{-1}(1-q)$. I.e., $\bar{R}(q)=\max_{\alpha,q_1,q_2} \{\alpha \cdot R(q_1)+(1-\alpha)R(q_2) \mid \alpha\cdot q_1+(1-\alpha)\cdot q_2=q\}$. The \emph{ironed virtual value} is defined as the derivative of the concave function $\bar{R}(q)$ at $q$ (similar to the virtual value in the regular case, which is equal to the derivative of $R(q)$ at $q$).
Note that the concave envelope $\bar{R}(q)$ of a single-parameter function $R(q)$ uses at most $2$ points of $R(q)$: we find a few disjoint intervals
$[a_1,b_1], \ldots, [a_k,b_k]$ and let $\ri(q)$ be the linear combination of the endpoints $(a_j, R(a_j)), (b_j,R(b_j))$ for the region $q\in [a_j,b_j]$; and $\ri(q)=R(q)$ for all other $q$.
Thus the ironed virtual value along each such interval $[a_j,b_j]$ is a constant (the derivative of a linear function $\ri(q)$). 
This means that for an irregular distribution $\disti=F$ for bidder $i$, all values $\vali\in[p(a_j),p(b_j)]$ within an ironed interval $q\in[a_j,b_j]$ are treated in the same way by the revenue optimal auction. 

\paragraph{Ex-ante optimal revenue.}
A common way to derive an upper bound on the optimal revenue is by the {\em ex-ante} relaxation. 
Given concave revenue curves $R_1 , \ldots , R_n$ (of regular distributions, or ironed ones $\bar{R}_1, \ldots, \bar{R}_n$), the ex-ante optimal revenue is given by the following program (referred to as the ex-ante program):
$$ \max_{\allocs} \sum_{i\in N} R_i(\alloci)\quad\quad\quad \text{s.t. } \sum_{i\in N} \alloci \le 1,\quad\forall i\in N ~~ \alloci\ge 0 
$$
Let $\allocopts=(\allocopti[1],\ldots,\allocopti[n])$ denote the optimal solution to this program.
The optimal ex-ante revenue is an upper bound on the optimal revenue. We refer the reader to \cite{HartlineR09} for more details.     
    
\paragraph{Lookahead Auctions.} A Lookahead Auction (LA) introduced by Ronen~\cite{Ronen01} is a DSIC and IR mechanism that can only allocate to the highest bidder. Every such auction consists of two stages: (i) the seller starts with a $c=0$ cutoff price and keeps continuously increasing it until all but one bidder $i^*$ drop from the competition, (ii) once only $i^*$ remains active, the seller sets a final take-it-or-leave price $\price(i^*,\valsmi[i^*])\ge c$ for $i^*$. The best LA achieves $1/2$-approximation (i.e., at least a half of the optimal revenue), even for the case of correlated prior distribution $\dists$, and the approximation ratio $1/2$ is tight even for independent priors $\dists=\prod_{i\in N}\disti$. The following LA maximizes revenue for the case of independent priors. 

\begin{definition}[Lookahead auction (\la)] Continuously and uniformly raise the cutoff price $c$ for all bidders $i\in N$
until only one bidder $i$ remains with $\vali\ge c$. Then, offer a final take-it-or-leave price $p\ge c$ to bidder $i$ based on $\disti$ and $c$, where the price $p$ is chosen to maximize the expected revenue $p = \argmax_{r\ge c} (r \cdot \prob{\vali \ge r \mid \vali\ge c})$.
\end{definition}

We next introduce a new variant of \la, called {\em lookahead with pooling} (\lap).
Similar to the treatment of irregular distributions in Myerson's auction, we consider the effect of ironing on the \la\ auction format. Namely, we allow the auctioneer, in addition to the standard continuous increments, to perform discrete jumps of the cutoff price $c$ from some $\lowc$ to some  $\upc>\lowc$ (i.e., \emph{pool together} all types $\val\in[\lowc,\upc)$) at any moment of the auction.
Consider such a jump, and let $S$ denote the set of currently active bidders who (simultaneously) drop as a result of this jump. 
If $S$ contains all currently active bidders, then the item is allocated to a random bidder $i\in S$ for a price of  $\pricei=\lowc$. 
If all but one active bidder $i^*$ drop, then bidder $i^*$ is offered a \emph{menu} of two options: (i) get the item $\alloci[i^*]=1$ at a high price $\pricei=p\cdot\frac{|S|-1}{|S|}+\lowc\cdot\frac{1}{|S|}$, where $
p = \argmax_{r\ge \upc} (r \cdot \prob{\vali \ge r \mid \vali\ge \upc})
$, or (ii) double back on the high bid (claim $\lowc\le \vali[i^*]<\upc$) and get $\alloci[i^*]=\frac{1}{|S|}$ ($S$ is the set of active bidders at $c=\lowc$) at a low price of $\pricei=\lowc$; note that the other bidders $i\in S$ get nothing $\alloci=0$ in this case. 
This family is called lookahead with pooling (\lap). 

We give below the formal definition of LAP for the general case of possibly correlated prior distribution.
Similarly to \la, a \lap\ auction maintains a cutoff price $c$ at every point in time, and raises it as the auction proceeds.

\begin{definition}[Lookahead auction with Pooling (\lap)] At each point in time, raise the cutoff price $c$ uniformly for all bidders in one of two possible ways: (a) continuously, as in LA; or (b) in a discrete jump from the current cutoff $\lowc$ to some $\upc>\lowc$, as long as the set of remaining buyers $S(c)\eqdef\{i \mid \vali\ge c \}$ consists of more than one bidder, i.e., $|S(c)|\ge 2$. 
Finally, after the last increment, either no bidder, or a single bidder accepts the cutoff price. If the last remaining bidder $i$ was determined in a continuous phase (a) of raising $c$, then $\{i\}=S(c)$ and we proceed as in LA (offer $i$ the revenue maximizing take-it-or-leave price $p = \argmax_{r\ge c} (r \cdot \prob{\vali \ge r \mid \vali\ge c})$). If we stop after a discrete jump (b) from $\lowc$ to $\upc>\lowc$ (i.e., $|S(\upc)|\le 1$), then 
\begin{itemize}
\item If $S(\upc)=\emptyset$, then the item is sold to a bidder $i$ chosen uniformly at random from  $S(\lowc)$ at a price $\lowc$ (i.e., for every $i\in S(\lowc)~~\alloci=\frac{1}{|S(\lowc)|}, \pricei=\alloci\cdot \lowc=\frac{\lowc}{|S(\lowc)|}$).
\item If $|S(\upc)|=1$, then the single bidder $i$ in $S(\upc)$ may choose one of the following two options:
\begin{enumerate}
\item Get the item at a price $\pricei=\lowc\cdot\alloci$ with probability $\alloci=\frac{1}{|S(\lowc)|}$.
(As if $S(\upc)=\emptyset$.)
\item Get the item with probability $\alloci=1$ at a price $\pricei= \frac{1}{|S(\lowc)|}\cdot \lowc + \frac{|S(\lowc)|-1}{|S(\lowc)|}\cdot r,$ where $r=\argmax_{t\ge \upc} \left(t \cdot \prob[\vals]{\vali \ge t ~\mid~ \vali\ge \upc, \valsmi}\right)$.
\footnote{Equivalent description: post a random take-it-or-leave price $\lowc$ to $i$ with probability $\frac{1}{|S(\lowc)|}$, with remaining probability post the revenue maximizing price $r$ to $i$ for the posterior distribution of $\vali$ given $\valsmi$ and that $\vali\ge \upc$.} 
\end{enumerate}
\end{itemize}
\end{definition}

For the case of independent prior $\dists=\prod_{i\in N}\disti$, we have 
$$r=\argmax_{t\ge \upc} \left(t \cdot \prob{\vali \ge t ~\mid~ \vali\ge \upc}\right).$$

The following theorem shows that \lap\ is truthful. The proof is deferred to Appendix~\ref{sec:truthful-lap}.
\begin{theorem}
\label{th:LAP_truthful}
Any $\lap$ auction is DSIC and IR.
\end{theorem}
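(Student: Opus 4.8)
The plan is to fix a single bidder $i$ together with the reports $\valsmi$ of everyone else and to verify the two conditions of Myerson's lemma for the one-dimensional outcome rule that $\lap$ induces on $i$. Writing $\alloci(b)$ and $\pricei(b)$ for the allocation probability and payment of $i$ when she reports $b$ (with $\valsmi$ held fixed), it suffices to show (a) $\alloci(b)$ is monotonically non-decreasing in $b$, and (b) $\pricei(b)=b\cdot\alloci(b)-\int_0^b \alloci(z)\,\dd z$, i.e.\ $\pricei$ is exactly the Myerson threshold payment associated with $\alloci$. Part (b) together with (a) yields DSIC by Myerson's lemma, and since it gives utility $\int_0^b\alloci(z)\,\dd z\ge 0$ with $\pricei(0)=0$, it also gives IR. The whole argument thus reduces to reading off $\alloci(\cdot)$ and $\pricei(\cdot)$ from the definition of $\lap$ and checking these two properties; the key observation that makes this tractable is that a bidder who has dropped is allocated only if she drops at the very last increment, so only reports for which $i$ survives to the terminating increment can matter.

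Which of two regimes occurs is fixed by $\valsmi$ alone: it depends only on whether the top competing value $c_0\eqdef\max_{j\ne i}\val_j$ falls in a continuous phase or inside a pooling interval $[\lowc,\upc)$ of the prior-determined schedule. In the continuous regime the auction collapses to plain $\la$: $i$ becomes the sole survivor exactly when $b>c_0$ and is then offered the definition's take-it-or-leave price $p\ge c_0$, which depends only on $c_0,\valsmi$ and the prior but not on $b$; hence $\alloci(b)=\idr{b\ge p}$ and $\pricei(b)=p\cdot\idr{b\ge p}$, so monotonicity and the payment identity are immediate, exactly as in Ronen's $\la$ analysis. In the jump regime let $[\lowc,\upc)$ be the pool containing $c_0$, let $k=\abs{S(\lowc)}$, and let $r$ be the price from the definition (which depends only on $\upc,\valsmi$ and the prior). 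Tracking $i$'s outcome as $b$ grows gives: for $b<\lowc$ she has dropped before the pool and $\alloci(b)=0$; for $\lowc\le b<\upc$ she drops inside the pool, so $S(\upc)=\emptyset$ and she receives the uniform lottery $\alloci(b)=\tfrac1k$ at price $\tfrac{\lowc}{k}$; for $\upc\le b<r$ she survives the pool and the menu's option~1 gives the same pair $(\tfrac1k,\tfrac{\lowc}{k})$; and for $b\ge r$ option~2 gives $\alloci(b)=1$ at price $\tfrac1k\lowc+\tfrac{k-1}{k}r$. Hence $\alloci(\cdot)$ is the non-decreasing step function $0\to\tfrac1k\to 1$ with jumps of size $\tfrac1k$ at $\lowc$ and $\tfrac{k-1}{k}$ at $r$, and the threshold payment it induces, $\tfrac1k\lowc+\tfrac{k-1}{k}r$, matches the $\lap$ payment exactly (and likewise $\tfrac{\lowc}{k}$ at the lower step).

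The main obstacle, and the place where the design of the menu is used, is the transition at the pooling boundary $c=\upc$: a priori ``dropping inside the pool'' (the $S(\upc)=\emptyset$ lottery) and ``surviving the pool and taking option~1'' are different events, and one must verify they produce the identical lottery $(\tfrac1k,\tfrac{\lowc}{k})$ so that $\alloci(\cdot)$ does not drop at $\upc$, which would destroy monotonicity and hence DSIC. This is precisely what the ``double back'' option~1 is engineered to guarantee, and it is what neutralizes the only non-obvious deviations: a true survivor under-reporting into $[\lowc,\upc)$ obtains exactly option~1, and a bidder with value in $[\lowc,\upc)$ over-reporting past $\upc$ optimally re-selects option~1 (since her value lies below $r$), so neither deviation pays. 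The remaining bookkeeping is to confirm that within each regime the quantities $\lowc,\upc,r$ and $k=\abs{S(\lowc)}$ are independent of $i$'s own report — which holds because the schedule and the set $S(\lowc)$ are determined once $\valsmi$ and the event ``$b\ge\lowc$'' are fixed, and $r$ depends only on $\upc$ and the prior — and then to assemble the continuous and jump regimes over the full range of $b$, giving monotonicity and the payment identity and thereby completing the reduction to Myerson's lemma.
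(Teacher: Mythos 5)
Your proof is correct, but it takes a different route from the paper's. The paper argues DSIC directly: it fixes the truthful run, partitions bidders according to whether they drop before the last increment, at a continuous termination, or at a terminal jump, and for each class enumerates the possible deviations and shows none yields positive gain. You instead fix $\valsmi$, write down the induced single\mbox{-}parameter allocation $\alloci(\cdot)$ and payment $\pricei(\cdot)$ explicitly, and verify Myerson's two conditions (monotonicity plus the threshold-payment identity); the case split you need (top competing value $c_0$ in a continuous phase versus inside the terminal pool $[\lowc,\upc)$) mirrors the paper's case analysis, but packaging it as ``$\alloci$ is the step function $0\to\frac1k\to1$ with the matching threshold payments'' makes the role of the menu transparent: option~1 is exactly what keeps $\alloci$ from dropping at $\upc$, which is the observation the paper makes more implicitly when it checks the over- and under-reporting deviations around the pool. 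Your approach buys a cleaner certificate (the payment identity comes out as a computation rather than a comparison of utilities) and isolates where truthfulness could fail; the paper's direct argument avoids having to justify that the outcome rule is a well-defined function of the report alone. On that last point, one phrase of yours deserves care: the pooling schedule need not be ``prior-determined'' --- the definition allows the auctioneer to decide adaptively, based on observed drop-outs, when and how far to jump --- so the reduction to a one-dimensional rule is only legitimate because, as you note at the end, the history up to any cutoff $c$ reveals about bidder $i$ only the event $\{b\ge c\}$, hence $\lowc,\upc,k=\abs{S(\lowc)}$ and $r$ are pinned down by $\valsmi$ together with that event. Since you do state and use this, there is no gap; I would just make that justification, rather than the phrase ``prior-determined schedule,'' carry the weight. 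A second minor point: the survivor's menu choice is made against her true value rather than her report, so strictly one should add that the menu offered is identical for all surviving reports and that any (report, option) pair is replicated by some report in your direct-revelation description --- which your discussion of the two boundary deviations effectively covers.
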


\section{\lap\ for Independent Valuations}
\label{sec:lap}
In this section we prove our main result (Theorem~\ref{thm:lap_regular}).
We first prove the theorem under the assumption that all value distributions $\disti$ are regular. In Subsection~\ref{sec:irregular}, we show how to extend this result to irregular distributions.

\begin{theorem}
\label{thm:lap_regular}
For any instance with (any number of) independent bidders, $\lap$ achieves a $\frac{4}{7}$-approximation to the optimal revenue.
\end{theorem}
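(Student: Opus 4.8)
The plan is to prove the bound through the ex-ante relaxation, which (for regular distributions, the case treated here) upper-bounds the optimal revenue $\opt$ by $\mathsf{EX}\eqdef \max\{\sum_{i} R_i(\alloci) : \sum_i \alloci\le 1,\ \alloci\ge 0\}$, where $R_i$ is the (concave) revenue curve of $\disti$. First I would record the first-order optimality condition for the ex-ante program: at the optimum $\allocopts$ every bidder with $\allocopti>0$ shares a common marginal revenue $R_i'(\allocopti)=\lambda\ge 0$; since $R_i'$ is exactly the virtual value in quantile space, this says $\phii(v_i^*)=\lambda$ for the reserve value $v_i^*=\disti^{-1}(1-\allocopti)$, and $\sum_i\allocopti=1$ whenever $\lambda>0$ (the degenerate $\lambda=0$ case being immediate). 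This common-virtual-value structure, together with concavity of the $R_i$, is what lets one compare a concrete $\lap$ to $\mathsf{EX}$ term by term.

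Next I would construct an explicit $\lap$ auction driven by the ex-ante reserves. Running the ascending phase sorts bidders by value and, as in a $2$-\la, leaves the top contenders in play; at the appropriate cutoff I would insert a single pooling jump from a low cutoff $\lowc$ to a high cutoff $\upc$ chosen relative to the ex-ante reserves, while the personalized final price $r=\argmax_{t\ge\upc} t\cdot\prob{\vali\ge t\mid \vali\ge\upc}$ for a surviving bidder $i$ implements $i$'s own monopoly reserve above $\upc$. The revenue I would compute by conditioning on how many active bidders survive the pooled interval: if none survive, the mechanism collects $\lowc$ from a uniformly random active bidder; if exactly one bidder $i$ survives, its two-option menu realizes a randomization between the low price $\lowc$ and the high price $r$, i.e.\ a point on the concave envelope of $R_i$; and if two or more survive we continue raising the cutoff and revert to the $\la$ rule among them. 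Using concavity and the identity $\phii(v_i^*)=\lambda$, each of these contributions can be bounded below by an explicit fraction of the corresponding ex-ante terms $R_i(\allocopti)$.

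The main obstacle is the tight worst-case analysis that yields the constant $\tfrac{4}{7}$ rather than something weaker. The difficulty is that the ``conflict'' event, in which two or more bidders survive the pooled interval but only one item can be sold, causes a genuine loss relative to $\mathsf{EX}$, and this loss must be played off against the gains harvested in the no-survivor and single-survivor events; moreover the uniform cutoffs $\lowc,\upc$ must be tuned to balance these effects across heterogeneous bidders, since the ascending phase cannot directly impose per-bidder reserves (only the personalized final price can). I would reduce this infinite-dimensional worst case---over all $n$ and all regular $\disti$---to a low-dimensional optimization by exploiting the symmetry of the constructed auction (the random tie-break among pooled survivors) and its top-two structure, expressing the ratio $\rev(\lap)/\mathsf{EX}$ in terms of a few aggregate quantities such as the survival probabilities and the two endpoint revenues of the pooled menu. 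Solving that reduced program is where the value $\tfrac{4}{7}$ should emerge, and verifying that the reduction is lossless---so that the reduced minimizer is actually realizable by regular distributions and corresponds to the claimed symmetric $2$-\la\ configuration---is the delicate step. Finally, the regular-distribution bound is transferred to irregular distributions via the ironed revenue curves $\ri$, as carried out in Subsection~\ref{sec:irregular}.
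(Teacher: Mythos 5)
Your proposal assembles the right ingredients (ex-ante relaxation, the pooled menu as a randomization between two points of the revenue curve, conditioning on how many bidders survive the jump), but it is missing the step that actually makes the paper's argument go through, and the parts you defer as ``delicate'' are precisely where the content lies. The paper does \emph{not} compete against the global $n$-bidder ex-ante benchmark. Instead it first writes $\opt=\Ex[\vals]{\max_\ell\varphi_\ell^+(v_\ell)}$ and conditions on the identity of the top two bidders $i,j$ and the third-highest value $v_k$; using $\varphi_m^+(v_m)\le v_m\le v_k$ for every other bidder, the conditional benchmark is upper-bounded by the optimal revenue of a \emph{three}-bidder instance: $i$ and $j$ (with distributions truncated to $[v_k,\infty)$) plus a dummy bidder of deterministic value $v_k$. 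Only for this localized instance is the ex-ante relaxation invoked (Lemma~\ref{lem:3-bidders}). This reduction is what lets a uniform ascending cutoff and a single pooling interval be analyzed at all; competing directly against the global ex-ante optimum, as you propose, runs into exactly the obstruction you name yourself --- the ex-ante solution may spread allocation over many bidders with wildly different reserves $v_i^*=\disti^{-1}(1-\allocopti)$, and no single pooling jump with a uniform cutoff can realize a constant fraction of each $R_i(\allocopti)$ simultaneously. Your first-order condition $R_i'(\allocopti)=\lambda$ does not rescue this, and you give no mechanism for converting it into a per-bidder revenue guarantee.

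The second gap is that the constant $\frac47$ does not come from tuning one pooling jump. The paper runs a case analysis over \emph{three} distinct \lap\ mechanisms for the reduced two-real-bidder problem --- (i) raise the cutoff to $p_1^*$ and pool $[p_1^*,p_2^*]$ if both survive, (ii) no pooling at all, (iii) pool $[v,p_1^*]$ or $[v,p_2^*]$ from the start --- proves the explicit lower bounds of Claim~\ref{cl:3-bidder} in terms of $r_1^*,r_2^*,x_1^*,x_2^*$, and then establishes the purely algebraic inequality that the maximum of these five expressions is at least $\frac47(r_1^*+r_2^*+1-x_1^*-x_2^*)$ via the case split $r_1^*\ge 1$ versus $r_1^*<1$ (Lemmas~\ref{lem:r1-greater-1} and~\ref{lem:r1-smaller-1}). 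Your proposal replaces all of this with ``solving that reduced program is where the value $\frac47$ should emerge,'' which restates the goal rather than proving it. As written, the proposal is a plausible research plan whose two hardest steps --- the reduction to the top-two-plus-dummy benchmark and the max-of-mechanisms inequality yielding $\frac47$ --- are absent.
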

\begin{proof}
To prove our main theorem, we first consider a hypothetical scenario with two real bidders and one dummy bidder. The dummy bidder has a known value $v$, and the two real bidders have independent values $\vali[1],\vali[2] \geq v$. This scenario naturally appears if we keep continuously increasing the cutoff price $c$ in a $\lap$ until exactly two bidders remain in $S(v)=\{i:\vali\ge v\}$. In this case, the $\lap$ can only sell to the two ``real'' bidders and it also has to adhere to the $\lap$ format. On the other hand, it is natural to upper bound the benchmark (the optimal mechanism for all $n$ bidders) with the optimal mechanism that can allocate to the two real bidders and also the dummy bidder with deterministic value $v$. 
The following lemma (Lemma~\ref{lem:3-bidders}) shows the existence of a \lap\ auction for the two-bidder scenario that gives a $\frac{4}{7}$-approximation with respect to the three-bidder scenario.

\begin{lemma}
 \label{lem:3-bidders}
There is a $\lap$ mechanism $\textsf{M}$ 
for the instance with two bidders $A,B$ with independent values of at least $v$ that obtains a $\frac{4}{7}$ approximation against the best revenue that can be obtained in an instance with bidders $A,B$ and an additional bidder $C$ with deterministic value $v$.
\end{lemma}

We first show how Lemma~\ref{lem:3-bidders} implies Theorem~\ref{thm:lap_regular}, and then prove the lemma. 
Our $\lap$ mechanism proceeds as follows: 

\begin{enumerate}
    \item Increase continuously the cutoff price $c$ until exactly two bidders $i,j\in N$ remain.
    \item Apply the $\lap$ mechanism $\textsf{M}$ that is guaranteed by Lemma~\ref{lem:3-bidders} to $i,j$, with $v$ equal to the last cutoff price $c$.
\end{enumerate}

We use $\textsf{M}(F_A,F_B,v)$ to denote the revenue generated by $\textsf{M}$ when the value distributions of $A$ and $B$ bidders are $F_A,F_B$, conditioned on the fact that their values are at least $v$.
We have the following.

\begin{multline*}
\opt = \Exlong[\vals]{\max_\ell \varphi_\ell^+ (v_\ell)} 
 = \sum_{i\ne j\ne k} \ExlongOpen[v_k]{\Prx{v_i,v_j \ge v_k} \cdot \prod_{m \ne i,j,k} \Prx{v_m < v_k} \cdot} \\ 
 \Exlong[\vals_{\text{-}k}]{\max_{\ell} \varphi_\ell^+(v_\ell) \Big| v_i,v_j \ge v_k, v_m < v_k, \forall m \ne i,j,k} \ExlongClose[\prod_{m \ne i,j,k}] 
 \le \sum_{i,j,k} \ExlongOpen[v_k]{\Prx{v_i,v_j \ge v_k} \cdot \vphantom{\prod_{m \ne i,j,k}}}\\ 
\prod_{m \ne i,j,k} \Prx{v_m < v_k} \cdot \Exlong[v_i,v_j]{\max_{\ell} \left( \varphi_i^+(v_i), \varphi_j^+(v_j), v_k \right) \Big| v_i,v_j \ge v_k}\ExlongClose[\prod_{\ell \ne i,j,k}] 
\end{multline*}
\begin{multline*}
 \le \sum_{i,j,k} \Exlong[v_k]{\Prx{v_i,v_j \ge v_k} \cdot \prod_{m \ne i,j,k} \Prx{v_m < v_k} \cdot \frac{7}{4} \cdot \textsf{M} \left( F_i |_{v_i \ge v_k}, F_j |_{v_j \ge v_k}, v_k \right)}\\ 
 = \frac{7}{4} \cdot \lap
\end{multline*}
The first equality follows from Myerson's theorem. For the second equality, we divide the value space according to the third highest value $v_k$ and the identity $i,j$ of the remaining bidders. In other words, $v_k$ is the value of the price $c$ when our $\lap$ mechanism switches to the second step. For the first inequality, we use the fact that the virtual value is no larger than the value, i.e., $\varphi_m^+(v_m) \le v_m \le v_k$. For the second inequality, we apply Lemma~\ref{lem:3-bidders} with respect to bidders $i,j$ and value $v_k$. Notice that by Myerson's theorem, $\Ex{\max ( \varphi_i^+(v_i), \varphi_j^+(v_j), v_k )}$ is exactly the optimal revenue when there is an additional bidder of deterministic value $v_k$. The final equality follows from the definition of our mechanism.
\end{proof}

\begin{remark}
The reduction from our main theorem to Lemma~\ref{lem:3-bidders} directly applies to irregular distributions.
Notice that for regular distributions, the virtual values with respect to $F_i$ and $F_i|_{v_i \ge v}$ are the same for all values $v_i \ge v$. We implicitly use this fact in the second inequality above. 
For irregular distributions, the inequality continues to hold in the right direction, i.e. the virtual value function only increases when we restrict the distribution to $v_i \ge v$.
\end{remark}

We next prove our main lemma.

\begin{proofof}{Lemma~\ref{lem:3-bidders}}
Consider the {\em ex-ante relaxation} for the 3-bidder scenario, with variables $x_1,x_2,x_3$ denoting the respective allocation probabilities  and $R_A(x_1), R_B(x_2)$, and $R_C(x_3)$ denoting the respective revenues obtained from the three bidders $A, B,$ and $C$.

\begin{equation}
 \underset{x_1,x_2,x_3}{\mbox{maximize}}: R_A(x_1)+R_B(x_2)+R_C(x_3) \quad \mbox{ s.t. } x_1 + x_2 + x_3 \leq 1
\end{equation}
We shall design a $\frac{4}{7}$-approximate $\lap$ mechanism against this stronger benchmark.
For regular distributions, $R_A(x_1)=x_i\cdot F^{-1}_{A}(1-x_1)$, $R_B(x_2)=x_2\cdot F^{-1}_{B}(1-x_2)$, and $R_C(x_3)=v\cdot x_3$. 
Without loss of generality, we assume that $v=1$. Then, $R_C(x_3)=x_3 \leq 1-x_1-x_2$.

We reformulate the program as follows.
\begin{equation}
	\label{ex-ante-prog}
 \max\limits_{x_1,x_2}~~ x_1 \cdot F_{A}^{-1}(1-x_1) + x_2 \cdot F_{B}^{-1}(1-x_2) + (1-x_1-x_2)  \quad \mbox{ s.t. } x_1 + x_2 \leq 1
\end{equation}
 
Let $x_1^{*}, x_2^{*}$ be the optimal solution to the ex-ante program~\eqref{ex-ante-prog}, and $p_1^{*}=
F_{A}^{-1}(1-x_1^{*}), p_2^{*}=F_{B}^{-1}(1-x_2^{*})$ be the corresponding prices. 
Let $r_1^{*}=R_A(x_1^*)=x_1^{*}\cdot p_1^{*}$ and $r_2^{*}=R_B(x_2^*)=x_2^{*} \cdot p_2^{*}$. It holds that
\begin{equation}
\label{eqn:opt_exante}
\opt_{\exante} = r_1^{*} + r_2^{*} + 1 - x_1^{*} - x_2^{*}.
\end{equation}

Without loss of generality, we assume that $p_1^* \le p_2^*$. We introduce below three \lap\ mechanisms that apply different pooling strategies. Each of these mechanisms uses pooling at most once during the course of the mechanism (and increases the cutoff price continuously at all other times). 
\begin{enumerate}
\item Continuously increase the cutoff price to $p_1^{*}$: 
    \begin{itemize}
	    \item if only bidder $A$ survives, we post a price $p_1^{*}$ to $A$;
	    \item if only bidder $B$ survives, we post a price $p_2^{*}$ to $B$;
	    \item if both $A$ and $B$ survive, pool $[p_1^{*},p_2^{*}]$. 
    \end{itemize}
\item Continuously increase the cutoff price without any pooling.
\item Either pool $[1,p_1^{*}]$ or pool $[1,p_2^{*}]$, whichever gives a higher expected revenue. 
\end{enumerate}

Let $\rev_1$, $\rev_2$, $\rev_3$ be the respective revenues of the three \lap\ mechanisms above.
\begin{claim}
\label{cl:3-bidder}
$\rev_1 \ge r_1^{*} + r_2^{*} - \frac{r_2^{*} \cdot x_1^{*} + r_1^{*} \cdot x_2^{*}}{2}$, $\rev_2 \ge \max\{r_1^{*},r_2^{*}\}$, \\
and $\rev_3 \ge \max\left\{ 1+\frac{r_1^{*}-x_1^{*}}{2}, 1+\frac{r_2^{*}-x_2^{*}}{2} \right\}$.
\end{claim}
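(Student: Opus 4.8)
The plan is to prove the three inequalities one at a time, in each case writing the revenue of the named mechanism as an expectation over the value configuration $(v_A,v_B)$ and lower bounding it region by region, where the regions are cut out by where $v_A,v_B$ fall relative to the thresholds $p_1^*,p_2^*$ (for Mechanisms~1 and~2) or relative to $1$ and $p_i^*$ (for Mechanism~3). Throughout I keep $x_1^*=\Pr[v_A\ge p_1^*]$ and $x_2^*=\Pr[v_B\ge p_2^*]$, and I introduce two auxiliary probabilities $a=\Pr[v_A\ge p_2^*]\le x_1^*$ and $b=\Pr[v_B\ge p_1^*]\ge x_2^*$; independence of $v_A,v_B$ lets me factor the probability of every region.

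The one reusable ingredient is an analysis of the two-option menu offered to a lone survivor of a pool $[\lowc,\upc]$ when $|S(\lowc)|=2$. Comparing the two utilities, a survivor of value $v$ prefers option~2 (buy at $\tfrac12\lowc+\tfrac12 r$) to option~1 (win with probability $\tfrac12$ at price $\lowc$) precisely when $v\ge r$, where $r=\argmax_{t\ge \upc}\, t\cdot\Pr[v_i\ge t\mid v_i\ge \upc]$; the threshold is independent of $\lowc$. Hence the expected revenue from a survivor (whose value is at least $\upc$) equals $\tfrac12\lowc+\tfrac12 M$, where $M=r\cdot\Pr[v_i\ge r\mid v_i\ge\upc]=\max_{t\ge\upc} t\cdot\Pr[v_i\ge t\mid v_i\ge\upc]$. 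Since $t=\upc$ is feasible, $M\ge\upc$, so the survivor contributes at least $\tfrac12(\lowc+\upc)$.

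For $\rev_2$ I use no pooling: I compare $\la$'s revenue-maximizing take-it-or-leave pricing of the highest bidder against the suboptimal rule ``offer the survivor $\max(c,p^*)$''. A short computation shows this suboptimal rule alone earns at least $p^*\cdot\Pr[\max(v_A,v_B)\ge p^*]$, which is at least $r_1^*$ for $p^*=p_1^*$ and at least $r_2^*$ for $p^*=p_2^*$; optimality of $\la$'s pricing gives $\rev_2\ge\max\{r_1^*,r_2^*\}$. For $\rev_3$ I apply the menu lemma to the pool $[1,p_i^*]$ (here $\lowc=1$, so both bidders lie in $S(1)$): when neither survives the item sells at price $1$, when exactly one survives that survivor contributes at least $\tfrac12(1+p_i^*)$ by the lemma, and when both survive the continuation earns at least $p_i^*$. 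Summing the four contributions for $i=1$, the auxiliary probability $b$ recombines so that the total equals $1+\tfrac12(p_1^*-1)(x_1^*+b)$, which is at least $1+\tfrac12(p_1^*-1)x_1^*=1+\tfrac12(r_1^*-x_1^*)$ because $p_1^*\ge 1$; the symmetric computation for $[1,p_2^*]$ yields the other term, and $\rev_3$ is the larger of the two.

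The most delicate case is $\rev_1$, where pooling and the menu interact. After raising continuously to $p_1^*$ I split into: only $A$ survives ($A$ pays $p_1^*$); only $B$ survives ($B$ is posted $p_2^*$ and pays it iff $v_B\ge p_2^*$); or both survive and we pool $[p_1^*,p_2^*]$, which itself splits into both values in $[p_1^*,p_2^*)$ (random sale at $p_1^*$), exactly one value above $p_2^*$ (menu, contributing $\ge\tfrac12(p_1^*+p_2^*)$ by the lemma), and both above $p_2^*$ (continuation, $\ge p_2^*$). I then write all six contributions as polynomials in $x_1^*,x_2^*,a,b$ (with $r_1^*=x_1^*p_1^*$, $r_2^*=x_2^*p_2^*$) and subtract the target $r_1^*+r_2^*-\tfrac12(r_2^*x_1^*+r_1^*x_2^*)$. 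I expect every monomial that is constant in $(a,b)$, linear in $b$, and linear in $a$ to cancel identically, leaving exactly $\tfrac12(p_2^*-p_1^*)\,ab\ge 0$. The obstacle is purely this bookkeeping: pinning down the menu threshold and the $|S(\lowc)|=2$ region probabilities exactly so that the cancellation goes through and the residual is manifestly nonnegative (using $p_1^*\le p_2^*$). This both proves the bound and shows it is tight when $p_1^*=p_2^*$.
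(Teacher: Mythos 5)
Your proposal is correct and follows essentially the same route as the paper: each of the three mechanisms is lower-bounded region by region, with the lone-survivor menu contributing $\tfrac12(\lowc+\upc)$ in expectation, and the extra nonnegative slack you identify (your $\tfrac12(p_2^*-p_1^*)ab$ residual for $\rev_1$, which I can confirm is exactly right, and the $b$-term for $\rev_3$) is simply dropped by the paper. Your explicit menu lemma in fact makes precise the paper's terse claim of a ``guaranteed revenue of $p_1^*$'' when both bidders survive, which holds only in expectation over the survivor's value (a survivor with value below the menu threshold $r$ pays only $p_1^*/2$), so your finer bookkeeping is a welcome clarification rather than a different proof.
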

\begin{proof}
We calculate the revenue of the first mechanism.
\begin{multline*}
\rev_1 \ge \Pr[v_1 \ge p_1^*, v_2 < p_1^*] \cdot p_1^* + \Pr[v_1 < p_1^*, v_2 \ge p_2^*] \cdot p_2^* \\
+ \left( \Pr[v_1 \ge p_1^*, v_2 \ge p_1^*] \cdot p_1^* + \Pr[v_1 \ge p_1^*, v_2 \ge p_2^*] \cdot \frac{ p_2^* - p_1^*}{2} \right) \\
= \Pr[v_1 \ge p_1^*] \cdot p_1^* + \Pr[v_1 < p_1^*, v_2 \ge p_2^*] \cdot p_2^* + \Pr[v_1 \ge p_1^*, v_2 \ge p_2^*] \cdot \frac{p_2^*- p_1^*}{2} \\
= x_1^* \cdot p_1^* + (1-x_1^*) \cdot x_2^* \cdot p_2^* + x_1^* \cdot x_2^* \cdot \frac{p_2^* - p_1^*}{2} = r_1^{*} + r_2^{*} - \frac{r_2^{*} x_1^{*} + r_1^{*} x_2^{*}}{2}. 
\end{multline*}
In the first inequality, the first two terms correspond to the revenue achieved when only one bidder survives after price $p_1^*$. In the third case when both bidders survive, we have a guaranteed revenue of $p_1^*$ and an additional revenue of at least $\frac{p_2^*-p_1^*}{2}$ if $v_2 \ge p_2^*$.\footnote{Observe that if both bidders survive after the pooling $[p_1^*,p_2^*]$, then the expected revenue we get is at least $p_2^*$. Indeed, in this case  we continuously increase the cutoff price and can post a sub-optimal price to the winner that is equal to the last cutoff price of $c$. This gives us a guaranteed payment of $c>p_2^*$.} 
The second mechanism achieves revenue greater than or equal to the revenues of the following two  mechanisms: post price $p_1^{*}$ to $A$, or post price $p_2^{*}$ to $B$. I.e., $\rev_2 \ge \max \{ r_1^*, r_2^*\}$.


For the third mechanism, if we pool $[1,p_1^{*}]$, then the revenue is at least
\[
1 + \Pr[v_1 \ge p_1^*] \cdot \frac{p_1^*-1}{2} = 1+ x_1^* \cdot \frac{p_1^*-1}{2} = 1 + \frac{r_1^*-x_1^*}{2}.
\]
The case for pooling $[1,p_2^*]$ is similar. Therefore, $\rev_3 \ge \max\left\{ 1+\frac{r_1^{*}-x_1^{*}}{2}, 1+\frac{r_2^{*}-x_2^{*}}{2} \right\}$.
\end{proof}

According to the above claim and equation~\eqref{eqn:opt_exante}, it suffices to verify that
\begin{multline}
\max \left\{ r_1^{*} + r_2^{*} - \frac{r_2^{*} \cdot x_1^{*} + r_1^{*} \cdot x_2^{*}}{2}, r_1^{*}, r_2^{*}, 1+\frac{r_1^{*}-x_1^{*}}{2}, 1+\frac{r_2^{*}-x_2^{*}}{2} \right\} \\
\ge \frac{4}{7} \cdot  \left( r_1^{*} + r_2^{*} + 1 - x_1^{*} - x_2^{*} \right)
\end{multline}
From now on, we shall treat this as an algebraic inequality and use the following properties:  1) $x_1^*+x_2^* \le 1$, and 2) $r_1^* \ge x_1^*, r_2^* \ge x_2^*$. Observe the symmetry of $r_1^*,x_1^*$ and $r_2^*,x_2^*$. We shall assume w.l.o.g. that $r_1^{*} \leq r_2^{*}$.
We prove the inequality by a case analysis.
We should recall that $v \ge 1$.

\begin{lemma}
	\label{lem:r1-greater-1}
	If $r_1^* \geq 1$, then $\max \left\{ r_1^{*} + r_2^{*} - \frac{r_2^{*} \cdot x_1^{*} + r_1^{*} \cdot x_2^{*}}{2}, r_2^{*} \right\} \ge \frac{2}{3} \cdot \left( r_1^{*} + r_2^{*} + 1 - x_1^{*} - x_2^{*} \right)$.
\end{lemma}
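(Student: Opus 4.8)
The plan is to treat the assertion as a purely algebraic inequality in the four nonnegative reals $r_1^{*}, r_2^{*}, x_1^{*}, x_2^{*}$, subject to the structural facts isolated just before the lemma: $x_1^{*} + x_2^{*} \le 1$, $r_i^{*} \ge x_i^{*}$ (because every posted price is at least $v = 1$), and the normalization $r_1^{*} \le r_2^{*}$; the hypothesis of the lemma adds $r_1^{*} \ge 1$, whence also $r_2^{*} \ge 1$. To lighten notation I abbreviate $r_i := r_i^{*}$, $x_i := x_i^{*}$, write $s := x_1 + x_2 \in [0,1]$, and let $T := r_1 + r_2 - \tfrac{1}{2}(r_2 x_1 + r_1 x_2)$ denote the first entry of the maximum. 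The goal is then $\max\{T, r_2\} \ge \tfrac{2}{3}(r_1 + r_2 + 1 - s)$, and since a maximum dominates every convex combination of its arguments, it suffices to exhibit, in each regime, a weight $\lambda \in [0,1]$ with $\lambda T + (1-\lambda) r_2$ above the right-hand side.

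The one genuine nuisance is the bilinear term $r_2 x_1 + r_1 x_2$, which enters $T$ negatively and so must be bounded from above. Here the normalization $r_1 \le r_2$ does the work: $r_2 x_1 + r_1 x_2 \le r_2 (x_1 + x_2) = r_2 s$, giving $T \ge r_1 + r_2 - \tfrac{1}{2} r_2 s$. After this single estimate everything in sight is linear in $s$, and I would finish with a two-way split at $r_2 = 2$. For $r_2 \ge 2$ I take $\lambda = \tfrac{2}{3}$: bounding the bilinear term as above, $\tfrac{2}{3} T + \tfrac{1}{3} r_2 \ge \tfrac{2}{3} r_1 + r_2 - \tfrac{1}{3} r_2 s$, and the excess over the target collapses to $\tfrac{1}{3}(1-s)(r_2 - 2) \ge 0$. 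For $1 \le r_2 < 2$ the bound $r_2$ is too weak (one checks $r_2 < \tfrac{2}{3}(r_1+r_2+1-s)$ because $\tfrac12 r_2 < 1 \le r_1 + 1 - s$), so I instead take $\lambda = 1$, i.e. show $T$ alone suffices: after the crude estimate the inequality reduces to $2 r_1 + 2 r_2 - 4 + s(4 - 3 r_2) \ge 0$, which is linear in $s$ and hence minimized at an endpoint — at $s = 0$ it reads $2(r_1 + r_2) - 4 \ge 0$, true since $r_1, r_2 \ge 1$, and at $s = 1$ it reads $2 r_1 - r_2 > 0$, true since $r_2 < 2 \le 2 r_1$.

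The crux, and the reason a single convex weight will not do, is that the two candidate lower bounds swap roles: for large $r_2$ the term $r_2$ carries the maximum while $T$ can fall below the target, and for $r_2$ near $1$ the reverse holds, so a crossover argument is unavoidable and the natural threshold turns out to be exactly $r_2 = 2$. The main thing to watch is that the estimate $r_2 x_1 + r_1 x_2 \le r_2 s$ is lossy, and it is a small piece of good fortune — worth double-checking in both regimes — that it remains tight enough to close the gap; were it not, I would refine it by keeping $r_1 x_2$ and $r_2 x_1$ separate and optimizing over the corners of the feasible region in $(x_1, x_2)$.
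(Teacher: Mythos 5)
Your proof is correct, but it takes a genuinely different route from the paper's. The paper first normalizes the instance: it scales $r_1^*$ and $r_2^*$ down by a common factor until $r_1^*=1$ (arguing this only tightens the inequality because the additive $1-x_1^*-x_2^*$ on the right-hand side does not scale), then shifts weight from $x_2^*$ to $x_1^*$ until $x_2^*=0$, reducing to a two-variable inequality in $(r_2^*,x_1^*)$; it then splits on $x_1^*\cdot r_2^*\ge 2$ versus $x_1^*\cdot r_2^*<2$, using the term $r_2^*$ alone in the first case (where the key step is $(r_2^*-2)^2\ge 0$) and the pooling term alone in the second. You instead keep all four variables, absorb the bilinear term via the single estimate $r_2^*x_1^*+r_1^*x_2^*\le r_2^*(x_1^*+x_2^*)$ (valid by the normalization $r_1^*\le r_2^*$), linearize in $s=x_1^*+x_2^*$, and split on $r_2^*\ge 2$ versus $r_2^*<2$, taking the convex combination $\tfrac{2}{3}T+\tfrac{1}{3}r_2^*$ in the first regime and $T$ alone in the second. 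I checked the algebra --- the surplus $\tfrac{1}{3}(1-s)(r_2^*-2)\ge 0$ in the first case, and the endpoint evaluations $2r_1^*+2r_2^*-4\ge 0$ at $s=0$ and $2r_1^*-r_2^*>0$ at $s=1$ in the second --- and it is sound. Your route bypasses the paper's two normalization/monotonicity deformations (each of which requires verifying that the constraints survive the deformation), at the cost of one lossy but sufficient bound on the cross term; the paper's route reduces everything to two variables first, which makes its case split ($x_1^*r_2^*\gtrless 2$ rather than your $r_2^*\gtrless 2$) easier to discover. Note also that your argument never uses $r_i^*\ge x_i^*$, only $x_1^*+x_2^*\le 1$, $r_1^*\le r_2^*$ and $r_1^*\ge 1$, so it establishes a marginally more general algebraic fact.
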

\begin{proof}
First, note that scaling down $r_1^{*}$ and $r_2^{*}$ by the same factor until $r_1^{*}=1$ would only make the inequality tighter, since $x_1^*+x_2^* \le 1$. Thus, we can assume that $r_1^{*}=1$. Moreover, increasing $x_1^{*}$ and decreasing $x_2^{*}$ by the same amount may only decrease the left hand side of the inequality (the constraint $r_1^*\ge x_1^*$ is satisfied as $r_1^*\ge 1\ge x_1^*+x_2^*$). Hence, it suffices to check the case when $x_2^*=0$, i.e., we are left to show
\[
\max \left\{ 1 + r_2^{*} - \frac{r_2^{*} \cdot x_1^{*}}{2}, r_2^{*} \right\} \ge \frac{2}{3} \cdot \left( 2 + r_2^{*} - x_1^{*} \right)
\]
	 
\noindent {\bf Case 1:} $x_1^{*} \cdot r_2^{*} \geq 2$. We show that $r_2^*\ge\frac{2}{3} \cdot \left( 2 + r_2^{*} - x_1^{*} \right)$. Indeed,
\[
\frac{2+r_2^*-x_1^*}{r_2^*} \leq \frac{2 + r_2^{*} - 2/r_2^{*}}{r_2^{*}} \leq \frac{3}{2},
\]
where the last inequality is equivalent to $(r_2^*-2)^2 \ge 0$.

\noindent {\bf Case 2:} $x_1^{*} \cdot r_2^{*} < 2$.
We show that $1+r_2^*-\frac{r_2^{*} \cdot x_1^{*}}{2}\ge\frac{2}{3} \cdot \left( 2 + r_2^{*} - x_1^{*} \right)$. We have
\begin{eqnarray*}
	\frac{2+r_2^{*}-x_1^{*}}{1+r_2^{*}-\frac{1}{2}r_2^{*} x_1^{*}} \leq \frac{3}{2} & \iff & 
	1 + \frac{3}{2} r_2^{*} x_1^{*} - r_2^{*} - 2 x_1^{*}\leq 0.
\end{eqnarray*}
The last inequality is linear in $x_1^{*}$ for any $r_2^{*}$. We can continuously change $x_1^{*}$ towards one of the extremes: $x_1^{*}=0$, or $x_1^{*}=1$, whichever maximizes $1 + \frac{3}{2} r_2^{*} x_1^{*} - r_2^{*} - 2 x_1^{*}$. Note that if the desired inequality holds for the updated $x_1^*$, then it must also hold for the original $x_1^*$. During this continuous change of $x_1^*$ we may get $x_1^{*}\cdot r_2^{*}=2$. In this case we stop and obtain the desired result due to the case 1 (note that  $1+r_2^*-\frac{1}{2}x_1^{*}\cdot r_2^{*}=r_2^*$ when $x_1^{*}\cdot r_2^{*}=2$). Otherwise, we reach one of the extremes $x_1^{*}=0$, or $x_1^{*}=1$ and still have $x_1^{*}\cdot r_2^{*}<2$.
For $x_1^{*}=0$, the last inequality becomes $1\leq r_2^{*}$, which is true. For $x_1^{*}=1$, the last inequality becomes  $1 + \frac{3}{2} r_2^{*} \leq r_2^{*} + 2 \iff r_2^{*} \leq 2$, which holds by virtue of case 2 (recall $x_1^{*}=1$).
\end{proof}

\begin{lemma}
	\label{lem:r1-smaller-1}
	If $r_1^{*} < 1$, then
\begin{multline*}
\max \left\{ r_1^{*} + r_2^{*} - \frac{r_2^{*} \cdot x_1^{*} + r_1^{*} \cdot x_2^{*}}{2}, r_2^{*}, 1+\frac{r_1^{*}-x_1^{*}}{2}, 1+\frac{r_2^{*}-x_2^{*}}{2} \right\} \\
\ge \frac{4}{7} \cdot  \left( r_1^{*} + r_2^{*} + 1 - x_1^{*} - x_2^{*} \right).
\end{multline*}
\end{lemma}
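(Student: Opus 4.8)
The plan is to recast the four quantities in the maximum through the substitution $s_1 = r_1^* - x_1^* \ge 0$ and $s_2 = r_2^* - x_2^* \ge 0$, which is natural because the right-hand side collapses: $r_1^* + r_2^* + 1 - x_1^* - x_2^* = 1 + s_1 + s_2$, so the target is $\frac{4}{7}(1 + s_1 + s_2)$. Under this substitution the two ``$+1$'' terms become $1 + \tfrac12 s_1$ and $1 + \tfrac12 s_2$, the term $r_2^*$ becomes $x_2^* + s_2$, and the first term becomes $s_1 + s_2 + g$ with $g := x_1^* + x_2^* - x_1^* x_2^* - \tfrac12 s_2 x_1^* - \tfrac12 s_1 x_2^*$. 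The available constraints translate to $0 \le s_1 \le r_1^* < 1$, $x_1^*, x_2^* \ge 0$, $x_1^* + x_2^* \le 1$, and $x_1^* + s_1 \le x_2^* + s_2$ (the ordering $r_1^* \le r_2^*$).

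First I would dispose of the two ``pooling from $1$'' terms: $1 + \tfrac12 s_1 \ge \frac47(1 + s_1 + s_2)$ is equivalent to $s_1 + 8 s_2 \le 6$, and $1 + \tfrac12 s_2 \ge \frac47(1 + s_1 + s_2)$ to $8 s_1 + s_2 \le 6$. If either holds we are done, so I may assume $s_1 + 8 s_2 > 6$ and $8 s_1 + s_2 > 6$; summing these gives the single clean inequality $s_1 + s_2 > \tfrac43$, which drives the rest. It then remains to show $\max\{s_1 + s_2 + g,\; x_2^* + s_2\} \ge \frac47(1 + s_1 + s_2)$. If $x_2^* + s_2 \ge \frac47(1 + s_1 + s_2)$ we are done, so I may also assume $x_2^* < \bar x := \frac47(1 + s_1) - \frac37 s_2$, and note that $s_1 + s_2 > \tfrac43$ forces $\bar x \le s_1$.

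The remaining work is to show the first term $s_1 + s_2 + g$ is at least $\frac47(1 + s_1 + s_2)$, and here the only nuisance is the bilinear piece in $g$. The key observation is that $g$ is affine in $x_1^*$ with slope $1 - x_2^* - \tfrac12 s_2$. When the slope is nonnegative, $g$ is minimized at $x_1^* = 0$, where $g = x_2^*(1 - \tfrac12 s_1) \ge 0$, so the first term is at least $s_1 + s_2 \ge \frac47(1 + s_1 + s_2)$ and we are done. When the slope is negative, i.e.\ $x_2^* > 1 - \tfrac12 s_2$, $g$ is minimized at the largest feasible $x_1^*$; here I would check that, because $x_2^* < \bar x \le s_1$, the binding constraint is $r_1^* < 1$, i.e.\ $x_1^* \le 1 - s_1$ (the budget $x_1^* \le 1 - x_2^*$ and the ordering constraint are both looser in this regime). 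Plugging $x_1^* = 1 - s_1$ gives a first term equal to $1 + \tfrac12 s_2 + \tfrac12 s_1 s_2 + \tfrac12 s_1 x_2^*$, which is increasing in $x_2^*$ and hence minimized at $x_2^* = 1 - \tfrac12 s_2$, yielding $1 + \tfrac12 s_1 + \tfrac12 s_2 + \tfrac14 s_1 s_2$. The desired bound then reduces to $12 - 2 s_1 - 2 s_2 + 7 s_1 s_2 \ge 0$, which holds with room to spare since $s_1 + s_2 < 6$ throughout the relevant range (indeed $s_1 < 1$ and $s_2 < \tfrac83$).

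I expect the only real obstacle to be the bilinear term $-x_1^* x_2^*$ in the first quantity, which blocks a one-shot convex-combination (LP-duality) proof and forces the split on the sign of the $x_1^*$-slope; once one commits to minimizing over $x_1^*$ at a boundary, identifying which of the three upper bounds on $x_1^*$ is active (and showing it is $1 - s_1$) is the one step that genuinely uses $s_1 + s_2 > \tfrac43$. The tightness of $\frac47$ sits outside this crux, at $s_1 = s_2 = \tfrac23$, $x_1^* = x_2^* = 0$, where the first term and both ``$+1$'' terms simultaneously equal $\tfrac43 = \frac47 \cdot \tfrac73$; this is a useful sanity check that the case split is aligned with where the bound actually binds.
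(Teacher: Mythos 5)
Your proof is correct; I checked each step. The substitution gives exactly $g = x_1^* + x_2^* - x_1^*x_2^* - \tfrac12 s_2 x_1^* - \tfrac12 s_1 x_2^*$; the equivalences $1+\tfrac12 s_1 \ge \tfrac47(1+s_1+s_2) \iff s_1+8s_2\le 6$ (and its mirror) do sum to $s_1+s_2>\tfrac43$; the bound $s_2<\tfrac83$ needed at the very end follows from $0\le x_2^*<\bar x$; and both branches of the slope split go through --- in particular, for the negative-slope branch the lower bound $g(x_1^*)\ge g(1-s_1)$ only requires $x_1^*\le 1-s_1$, which is exactly $r_1^*<1$, so your identification of which of the three upper bounds on $x_1^*$ is binding is a tightness remark rather than a logical necessity. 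The route is, however, genuinely different from the paper's. The paper first reduces to $x_2^*=0$ by noting that decreasing $r_2^*$ and $x_2^*$ by the same amount only lowers the relevant terms of the max (using $\tfrac{x_1^*+r_1^*}{2}\le 1$), and then splits on $r_2^*\le 2$ versus $r_2^*>2$: in the first case the bilinear term is linearized via $\tfrac{r_2^*x_1^*}{2}\le x_1^*$ and a convex combination of the first, third and fourth terms with weights $\tfrac34,\tfrac12,\tfrac12$ (summing to $\tfrac74$) certifies the bound; in the second case a weight-$\tfrac32$ combination of the first and second terms suffices. Your version keeps $x_2^*$ general, extracts the single driving inequality $s_1+s_2>\tfrac43$ by assuming the two pooling terms fail, and disposes of the bilinear term by pushing $x_1^*$ (and then $x_2^*$) to a boundary. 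The paper's normalization buys shorter algebra; yours buys a transparent picture of where $\tfrac47$ is tight ($s_1=s_2=\tfrac23$, $x_1^*=x_2^*=0$) and avoids the preliminary reduction, at the cost of more bookkeeping in the boundary analysis.
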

\begin{proof}
Note that it is w.l.o.g. to assume that $x_2^{*}=0$. Indeed, by decreasing $r_2^{*}$ ($r_2^{*}\ge x_2^{*}$) and $x_2^{*}$ by the same amount, only the first two terms $r_1^{*} + r_2^{*} - \frac{r_2^{*} \cdot x_1^{*} + r_1^{*} \cdot x_2^{*}}{2}$ and $r_2^{*}$ decrease (the first term decreases, because $\frac{x_1^*+r_1^*}{2}\le 1$, as  $1> r_1^{*} \ge x_1^*$) and nothing else changes.
Substituting $x_2=0$, it suffices to show that:
\[
\max \left\{ r_1^{*} + r_2^{*} - \frac{r_2^{*} \cdot x_1^{*}}{2}, r_2^{*}, 1+\frac{r_1^{*}-x_1^{*}}{2}, 1+\frac{r_2^{*}}{2} \right\} \ge \frac{4}{7} \cdot  \left( r_1^{*} + r_2^{*} + 1 - x_1^{*} \right).
\]

\noindent {\bf Case 1:} $r_2^{*} \leq 2$. We have 
\begin{multline*}
\frac{7}{4} \cdot \max \left\{ r_1^{*} + r_2^{*} - \frac{r_2^{*} \cdot x_1^{*}}{2}, 1+\frac{r_1^{*}-x_1^{*}}{2}, 1+\frac{r_2^{*}}{2} \right\} \\
\ge  \frac{3}{4} \cdot \left(r_1^{*} + r_2^{*} - \frac{r_2^{*} \cdot x_1^{*}}{2} \right) + \frac{1}{2} \cdot \left( 1+\frac{r_1^{*}-x_1^{*}}{2} \right) + \frac{1}{2} \cdot \left( 1+\frac{r_2^{*}}{2} \right) \\
\ge  \frac{3}{4} \cdot \left(r_1^{*} + r_2^{*} - x_1^* \right) + \frac{1}{2} \cdot \left( 1+\frac{r_1^{*}-x_1^{*}}{2} \right) + \frac{1}{2} \cdot \left( 1+\frac{r_2^{*}}{2} \right) = r_1^{*} +r_2^{*} + 1 - x_1^{*} 
\end{multline*}

\noindent {\bf Case 2:} $r_2^{*} > 2$. We have
\begin{multline*}
\frac{3}{2} \cdot \max \left\{ r_1^{*} + r_2^{*} - \frac{r_2^{*} \cdot x_1^{*}}{2}, r_2^{*} \right\} \ge  \left( r_1^{*} + r_2^{*} - \frac{r_2^{*} \cdot x_1^{*}}{2} \right) + \frac{1}{2} \cdot r_2^* \\
\geq  r_1^{*} + r_2^{*}  +\frac{1}{2} r_2^{*} (1 - x_1^{*}) \geq  r_1^{*} + r_2^{*}  + 1 - x_1^{*}
\end{multline*}
\end{proof}

Combining Lemmas~\ref{lem:r1-greater-1} and~\ref{lem:r1-smaller-1}, we conclude the proof of the main lemma. 
\end{proofof}

\subsection{Extension to Irregular Distributions}
\label{sec:irregular}
In this section we extend the proof of Theorem~\ref{thm:lap_regular} to irregular distributions.
It suffices to show that Lemma~\ref{lem:3-bidders} holds for irregular distributions. We follow the same analysis by solving the ex-ante relaxation.
\begin{equation}
 \underset{x_1,x_2}{\mbox{maximize}}: \ri[A](x_1)+\ri[B](x_2)+1-x_1-x_2 \quad \mbox{ s.t. } x_1 + x_2 \leq 1,
\end{equation}
where  $\ri[A](x_1)$ and $\ri[B](x_2)$ denote the ironed revenue curves of bidders $A$ and $B$, respectively.

Denote the optimal solution by $x_1^*,x_2^*$. The only difference compared to the regular case is that the values $\ri[A](x_1^*), \ri[B](x_2^*)$ might not correspond to a single posted price $p_1^*=F_A^{-1}(1-x_1^*)$, or  $p_2^*=F_B^{-1}(1-x_2^*)$. In fact, if both $\ri[A](x_1^*), \ri[B](x_2^*)$ correspond to posted prices $p_1^*$ and $p_2^*$, respectively, then the proof of Lemma~\ref{lem:3-bidders} for regular distributions applies intact.

Recall that each of $\ri[A](x_1^*)$ and $\ri[B](x_2^*)$ corresponds to a randomization of two prices if and only if $x_i^*$ lies on the ironed region of the corresponding revenue curve $R_A(x_1)$ or $R_B(x_2)$.

Consider the case when both $x_1^*, x_2^*$ lie strictly inside the ironed regions of their respective revenue curves $\ri[A](\cdot), \ri[B](\cdot)$. Then their derivatives $\ri[A]'(\cdot)$ and $\ri[B]'(\cdot)$ must be locally a constant around $x_1^*$, or respectively around $x_2^*$. Moreover, $\ri[B]'(x_1^*) = \ri[B]'(x_2^*)$, since otherwise, 
$$
\max \left(\ri[A](x_1^*+\eps) + \ri[B](x_2^*-\eps), \ri[A](x_1^*-\eps) + \ri[B](x_2^*+\eps) \right )> \ri[A](x_1^*) + \ri[B](x_2^*),
$$
holds for a sufficiently small $\eps>0$, violating the optimality of $x_1^*,x_2^*$.
Consequently, we can keep increasing $x_1^*$ and decreasing $x_2^*$ at the same rate until one of them hits the end of their respective ironed interval, and the objective function $\ri[A](x_1^*) + \ri[B](x_2^*)$ remains unchanged.
Therefore, we can assume that only one of $x_1^*,x_2^*$ lies inside the ironed region. Without loss of generality, we assume that it is $x_1^*$.

Furthermore, we may assume that $x_1^*+x_2^*=1$. Indeed, otherwise we can increase $x_1^*$ (when $\ri[A]'(x_1^*)>1$), or decrease $x_1^*$ (when $\ri[A]'(x_1^*)\le 1$) until $x_1^*$ escapes from its ironed region, or we get $x_1^*+x_2^*=1$.

Finally, it remains to consider the case where $\ri[A](x_1^*)$ corresponds to a randomization of two prices, $\ri[B](x_2^*)$ corresponds to a single price, and $x_1^*+x_2^* = 1$.

Let $x_1^* = \alpha \cdot x_{11}^* + (1-\alpha)\cdot x_{12}^*$, and $p_{11}^{*} = F_A^{-1}(1-x_{11}^*), p_{12}^{*} = F_A^{-1}(1-x_{12}^*)$ be the corresponding prices.
Let $r_{11}^* \eqdef x_{11}^* \cdot p_{11}^*, r_{12}^* \eqdef x_{12}^* \cdot p_{12}^*$ and $r_1^{*}\eqdef\ri[A](x_1^*)=\alpha \cdot x_{11}^{*}\cdot p_{11}^{*} + (1 - \alpha) \cdot x_{12}^{*}\cdot p_{12}^{*}$. 
Let $r_2^{*}\eqdef \ri[B](x_2^*)= x_2^{*}\cdot p_2^{*}$.
Then we have
$$
\opt_{\exante} = r_1^{*} + r_2^{*} = \alpha \cdot r_{11}^* + (1-\alpha) \cdot r_{12}^* + r_2^* 
$$

We use one of the following two $\lap$ mechanisms from the proof of Lemma~\ref{lem:3-bidders} with only slight modifications.

\begin{enumerate}
\item Continuously increase the cutoff price to $p_1^{*}$: 
    \begin{itemize}
	    \item if only bidder $A$ survives, we post a price $p_1^{*}$ to $A$;
	    \item if only bidder $B$ survives, we post a  price $p_2^{*}$ to $B$;
	    \item if both $A$ and $B$ survive, pool $[p_1^{*},p_2^{*}]$. 
    \end{itemize}
\item Continuously increase the cutoff price without any pooling.
\end{enumerate}

We modify the above mechanisms by setting $p_1^*$ to be $p_{11}^*$ or $p_{12}^*$, whichever gives a higher expected revenue. 
Denote the revenue of the two mechanisms by $\rev_1, \rev_2$. 
Following the same derivations as in Claim~\ref{cl:3-bidder}, we get that 
\begin{align*}
& \rev_1 \ge \max \left\{ r_{11}^{*} + r_{2}^* - \frac{r_2^*\cdot x_{11}^* + r_{11}^*\cdot x_2^*}{2}, r_{12}^{*} + r_{2}^* - \frac{r_2^*\cdot x_{12}^* + r_{12}^*\cdot x_2^*}{2} \right\}~; \\
& \rev_2 \ge \max \left\{ r_{11}^*, r_{12}^*, r_2^* \right\}~.
\end{align*}
Consequently,  we have
\begin{align*}
\rev_1 & \ge \alpha \cdot \left( r_{11}^{*} + r_{2}^* - \frac{r_2^*\cdot x_{11}^* + r_{11}^*\cdot x_2^*}{2} \right) + (1-\alpha) \cdot \left( r_{12}^{*} + r_{2}^* - \frac{r_2^*\cdot x_{12}^* + r_{12}^*\cdot x_2^*}{2} \right) \\
& = r_1^* + r_2^* - \frac{r_2^*\cdot x_{1}^* + r_{1}^*\cdot x_2^*}{2}~; \\
\rev_2 & \ge \max \left\{ \alpha \cdot r_{11}^* + (1-\alpha) \cdot r_{12}^*, r_2^* \right\} =  \max \left\{ r_{1}^*, r_2^* \right\} \ge x_1^* \cdot r_2^* + x_2^* \cdot r_1^*~.
\end{align*}
Finally, we have
\begin{multline*}
\lap \ge \frac{2}{3} \left( \rev_1 + \frac{1}{2} \cdot \rev_2\right) \ge \frac{2}{3} \left( r_1^* + r_2^* - \frac{r_2^*\cdot x_{1}^* + r_{1}^*\cdot x_2^*}{2} + \frac{x_1^* \cdot r_2^* + x_2^* \cdot r_1^*}{2} \right) \\
= \frac{2}{3} \left( r_1^* + r_2^* \right) = \frac{2}{3} \opt > \frac{4}{7} \opt~.
\end{multline*}
This concludes the proof of Lemma~\ref{lem:3-bidders} for irregular distributions.

\section{LAP for Correlated Values}
\label{sec:correlated}

In this section we show that when the value distributions are correlated, then LAP cannot achieve better performance than LA, even in the case of $n=2$ bidders. 
We construct a correlated distribution with the following unnatural \emph{``cryptographic"} feature: the seller can precisely recover the value of the lower bidder from the bid of the higher bidder, while the lower bid reveals almost no useful information about the top value. 
We construct the distribution of pairs of values $(v_1\le v_2)$ as follow.
\begin{itemize}
\item $v_1\sim \disti[1]$, where $\prob{v_1>x}=\frac{1}{x}$ is a discrete equal revenue distribution supported on the interval $[1,\frac{1}{\eps_1}]$ for a small $\eps_1>0$. 
\item $v_2= v_1\cdot (\xi_2+\eps^3)$, where $\eps>0$ is a negligibly small number, $\xi_2\sim\disti[2]$ and $\disti[2]$ is a discrete equal revenue distribution with the support $[1,\frac{1}{\eps_2}]$ for a small $\eps_2>0$. We assume that $\eps_1 \ll \eps_2$.
\item The supports of the discrete distributions $v_1\sim\disti[1],\xi_2\sim\disti[2]$ consists of numbers that are multiples of $\eps<\eps_1$. This allows us to recover $v_1$ from the value of $v_2$ alone. Indeed, notice that $v_2=n_1\eps\cdot(n_2\eps+\eps^3)$ for integer $n_1,n_2<\frac{1}{\eps^2}$. Then $\frac{v_2}{\eps^2}=n_1\cdot n_2 +\eps^2\cdot n_1$ and its fractional part $\left\{\frac{v_2}{\eps^2}\right\}=v_1\cdot\eps$. On the other hand, we can ignore the tiny term of $\eps^3$ in $v_2$, i.e., just think of $v_2=v_1\cdot \xi_2$ without any noticeable change to the posterior distribution $\disti[2](v_2 | v_1)$.
\end{itemize}

\begin{theorem}
\label{thm:correlated}
No LAP mechanism achieves better than $\frac{1}{2}+o(1)$ approximation to the optimal revenue for the above instance of correlated distribution.
\end{theorem}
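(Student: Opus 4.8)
The plan is to bound both sides in terms of $\E[v_1]=1+\ln(1/\eps_1)\to\infty$: I will show $\opt\ge(2-\eps_2)\,\E[v_1]$ while every $\lap$ collects at most $\E[v_1]$, so that the approximation ratio of any $\lap$ is at most $\tfrac{1}{2-\eps_2}=\tfrac12+o(1)$ as $\eps_2\to0$ (the assumption $\eps_1\ll\eps_2\to0$ keeps the recovery exact and the additive $O(1)$ terms negligible against $\ln(1/\eps_1)$).

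\textbf{Lower bound on $\opt$.} First I would exhibit an explicit truthful mechanism exploiting the cryptographic recovery. Since bidder $2$'s price may depend on bidder $1$'s report without affecting bidder $2$'s incentives, I post to bidder $2$ the price $v_1/\eps_2$; by the equal-revenue property of $\xi_2$ this sells with probability $\eps_2$ and yields $v_1$ in expectation. Whenever bidder $2$ declines (probability $1-\eps_2$), I recover $v_1$ exactly from $v_2$ and offer the item to bidder $1$ at exactly this recovered value. This price is a function of $v_2$ only, so bidder $1$ is indifferent across reports, truthful, and has ex-post utility $0$. Conditioned on $v_1$ the revenue is $v_1+(1-\eps_2)v_1=(2-\eps_2)v_1$, giving $\opt\ge(2-\eps_2)\,\E[v_1]$. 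The one delicate point to check is that a misreport by bidder $2$ merely corrupts the value used to price bidder $1$ and never changes bidder $2$'s own posted-price outcome, so DSIC and ex-post IR hold for both bidders.

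\textbf{Upper bound on $\lap$.} The structural observation is that $v_2>v_1$ always, so bidder $1$ can never be a sole survivor and the auction terminates the instant the rising cutoff reaches $v_1$. A (w.l.o.g.\ deterministic) $\lap$ therefore reduces to a fixed partition of $[1,\infty)$ into continuous regions and pools, and the outcome is governed by the piece containing $v_1$. On a continuous region bidder $2$ is left alone and charged the monopoly price of the posterior $v_2\mid v_2\ge v_1$, which by equal revenue equals exactly $v_1$. On a pool $[a,b)$ either both bidders drop ($v_2<b$), giving revenue $a$, or bidder $2$ survives ($v_2\ge b$) and faces the two-option menu.

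\textbf{The decisive computation and the main obstacle.} The hard part is this upper bound, since it must defeat every pooling schedule and every menu reserve simultaneously. I would evaluate the menu exactly: bidder $2$ buys for sure iff $v_2\ge r$, and since $\Pr[v_2\ge r\mid v_2\ge b]=b/r$ for the scaled equal-revenue posterior, the expected menu payment telescopes to $\tfrac{a+b}{2}$, independent of $r$. Hence the expected revenue for $v_1\in[a,b)$ is $a+\tfrac{v_1}{2}\bigl(1-\tfrac{a}{b}\bigr)$. Integrating this against $f_1(v_1)=1/v_1^2$ and comparing with the continuous value $\int_a^b v_1 f_1=\ln(b/a)$, the inequality ``pool $\le$ continuous'' reduces, with $\lambda=\ln(b/a)$, to $(1-e^{-\lambda})(1+\tfrac{\lambda}{2})\le\lambda$, which holds because the derivative of the difference equals $\tfrac12\bigl(e^{-\lambda}(1+\lambda)-1\bigr)\le0$. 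Thus pooling is dominated by continuous raising on every interval; summing over the partition yields $\lap\le\E[v_1]$, attained by plain $\la$. I would additionally dispose of ``long'' pools with $b>v_1/\eps_2$, where bidder $2$ can never survive and the revenue is only $a\le v_1$, which favors continuous raising even more. Combining with the $\opt$ bound gives the claimed $\tfrac12+o(1)$.
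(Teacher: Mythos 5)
Your proposal is correct and follows essentially the same route as the paper: the identical $\opt$ lower bound via the cryptographic recovery of $v_1$ from $v_2$, the same decomposition of the \lap\ outcome according to whether $v_1$ lands in a continuous region or a pool $[a,b)$, the same per-pool revenue $a+\tfrac{v_1}{2}(1-\tfrac{a}{b})$ integrated against $1/t^2$, and the same key inequality showing each pool is dominated by continuous raising (your derivative verification of $(1-e^{-\lambda})(1+\tfrac{\lambda}{2})\le\lambda$ is a cleaner substitute for the paper's appeal to a Taylor expansion). No gaps.
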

\begin{proof}
To simplify the presentation we will ignore in the analysis small terms that appear due to the discretezation by $\eps$. Assume to the contrary that there is a LAP mechanism with better than $\frac{1}{2}+o(1)$ approximation to the optimum. This mechanism keeps increasing the lowest threshold for both bidders in leaps $[s_1,t_1], [s_2,t_2],\ldots$ until at least one bidder drops, in which case it offers a new final price to the surviving bidder, if a single one survives, or sells the item at the latest offered price to one of the bidders picked uniformly at random, if none of them survived.
Here we assume that for every $i$, the LAP proceeds in one jump from $s_i$ to $t_i$ and then increases the price continuously from $t_i$ to $s_{i+1}$.

The optimum mechanism can offer to the second bidder the highest possible price of $\frac{v_1}{\eps_2}$ and if he rejects, offer a price of $v_1$ (which the seller can identify from the second bidder) to the first bidder. Thus the optimum gets the expected revenue of $(1-\eps_2)v_1+\frac{v_1}{\eps_2}\eps_2=(2-o(1))v_1$ for each fixed value of $v_1$, which in expectation gives\footnote{We write $\approx$ (and not ``$=$''), as the distributions in our construction are discrete instead of the continuous equal revenue distribution. We write the integral and not the summation over the discrete values to simplify our presentation and calculations. This is w.l.o.g., as $\eps_1,\eps_2 \to 0$ and the integral above arbitrarily closely approximates the respective summation. We apply the same  approximation principle in our derivations below.} 
\[\opt \approx \int_{1}^{1/\eps_1}\frac{1}{t^2}\cdot(2-\eps_2)\cdot t\;dt = (2-\eps_2)\ln(1/\eps_1), \]
Next, we analyse the performance of LAP in different cases of values  $v_1,v_2$. We will argue that in almost all cases LAP's revenue is not more than the revenue of an LA which is a 2-approximation to $\opt$ (LA can only sell to the second bidder and thus for each value of $v_1$ the best revenue it can get is $v_2$). 

First, if $v_1$ falls into any interval $(t_i,s_{i+1})$, then LAP will certainly loose the lowest bidder. The best revenue it can get from $v_2= v_1\cdot \xi_2$ will be $v_1$. I.e., in expectation it is $\int_{t_i}^{s_{i+1}}\frac{1}{t^2}\cdot t\;dt$.

Second, if $v_1\in[s_i,t_i]$ for $t_i<\frac{1}{\eps_1}$. Then there are two possibilities for the top bid $v_2$: (i) $v_2<t_i$, in which case LAP sells equally likely to one of the bidders at price $s_i$; (ii) $v_2\ge t_i$, in which case the LAP can only sell to the highest bidder and its allocation and payment rules are bounded by the incentive constraint from case (i). In case (i) the revenue is $s_i$, while in (ii) case with half probability the item must be sold to the first bidder at a low price of $s_i$ (due to the incentive constraint for the top bidder) and with the remaining probability it can be sold at the best price of $t_i$ (since $\disti[2]$ is the equal revenue distribution). Hence, the total revenue of LAP in this case is 
\begin{equation}
\label{eq:lap_correlated_interval}
\lap([s_i,t_i])=\int_{s_i}^{t_i}\frac{1}{t^2}\cdot
\left(s_i\cdot\frac{t_i-t}{t_i}+\left(\frac{s_i}{2}+\frac{t_i}{2}\right)\cdot\frac{t}{t_i}\right)\;dt
=1-\frac{s_i}{t_i}+\frac{t_i-s_i}{2t_i}\ln \frac{t_i}{s_i}.
\end{equation}
Denoting by $x\eqdef\frac{s_i}{t_i}\in(0,1]$ we get  $\eqref{eq:lap_correlated_interval}=1-x-\frac{1-x}{2}\ln x \le -\ln(x) $ (the latter inequality can be verified using Taylor expansion of $\ln (1- y)$ where $y=1-x$). Thus $\eqref{eq:lap_correlated_interval}\le \int_{s_i}^{t_i}\frac{1}{t^2}\cdot t\;dt=\la([s_i,t_i]).$

Finally, if $v_1\in [s_i,t_i]$ for $t_i>\frac{1}{\eps_1}$, then LAP cannot get more revenue than $v_1$ from the second bidder and the payment from the lower bidder is not more than $s_i$. The total expected revenue of LAP in this case is 
\[
\lap([s_i,1/\eps_2])\le\int_{s_i}^{1/\eps_1}\frac{1}{t^2}\cdot\left(s_i+t\right)\;dt
\le 1 + \la([s_i,1/\eps_1]).
\]

In conclusion we get that $\lap\le 1 + \la\le 1+\opt(\frac{1}{2}+o(1))\le\opt(\frac{1}{2}+o(1))$, since $\opt=(2-\eps_1)\ln(1/\eps_2)$.
\end{proof}

This cryptographic construction makes our correlated distribution not quite realistic. The point we illustrate here is that the class of general correlated distribution even for $n=2$ bidders is just unrealistically general.

\bibliographystyle{splncs04}
\bibliography{LAP}

\begin{thebibliography}{10}
\providecommand{\url}[1]{\texttt{#1}}
\providecommand{\urlprefix}{URL }
\providecommand{\doi}[1]{https://doi.org/#1}

\bibitem{BeiGLT19}
Bei, X., Gravin, N., Lu, P., Tang, Z.G.: Correlation-robust analysis of single
  item auction. In: {SODA}. pp. 193--208. {SIAM} (2019)

\bibitem{ChenHLW11}
Chen, X., Hu, G., Lu, P., Wang, L.: On the approximation ratio of k-lookahead
  auction. In: {WINE}. Lecture Notes in Computer Science, vol.~7090, pp.
  61--71. Springer (2011)

\bibitem{CremeMcLean}
Crémer, J., McLean, R.P.: Full extraction of the surplus in bayesian and
  dominant strategy auctions. Econometrica  \textbf{56}(6),  1247--1257 (1988),
  \url{http://www.jstor.org/stable/1913096}

\bibitem{DebPai17}
Deb, R., Pai, M.M.: Discrimination via symmetric auctions. American Economic
  Journal: Microeconomics  \textbf{9}(1),  275--314 (2017),
  \url{http://www.jstor.org/stable/26157031}

\bibitem{DobzinskiFK11}
Dobzinski, S., Fu, H., Kleinberg, R.D.: Optimal auctions with correlated
  bidders are easy. In: Fortnow, L., Vadhan, S.P. (eds.) Proceedings of the
  43rd {ACM} Symposium on Theory of Computing, {STOC} 2011, San Jose, CA, USA,
  6-8 June 2011. pp. 129--138. {ACM} (2011). \doi{10.1145/1993636.1993655},
  \url{https://doi.org/10.1145/1993636.1993655}

\bibitem{DobzinskiU18}
Dobzinski, S., Uziely, N.: Revenue loss in shrinking markets. In: {EC}. pp.
  431--442. {ACM} (2018)

\bibitem{HartlineBook}
Hartline, J.: Mechanism design and approximation. Available online (Accessed on
  May 7 2020), \url{http://jasonhartline.com/MDnA/}

\bibitem{HartlineR09}
Hartline, J.D., Roughgarden, T.: Simple versus optimal mechanisms. In: Chuang,
  J., Fortnow, L., Pu, P. (eds.) Proceedings 10th {ACM} Conference on
  Electronic Commerce (EC-2009), Stanford, California, USA, July 6--10, 2009.
  pp. 225--234. {ACM} (2009). \doi{10.1145/1566374.1566407},
  \url{https://doi.org/10.1145/1566374.1566407}

\bibitem{Myerson81}
Myerson, R.: Optimal auction design. Mathematics of Operations Research
  \textbf{6}(1),  58--73 (1981)

\bibitem{PapadimitriouP11}
Papadimitriou, C.H., Pierrakos, G.: On optimal single-item auctions. In:
  {STOC}. pp. 119--128. {ACM} (2011)

\bibitem{Ronen01}
Ronen, A.: On approximating optimal auctions. In: Wellman, M.P., Shoham, Y.
  (eds.) Proceedings 3rd {ACM} Conference on Electronic Commerce (EC-2001),
  Tampa, Florida, USA, October 14-17, 2001. pp. 11--17. {ACM} (2001).
  \doi{10.1145/501158.501160}, \url{https://doi.org/10.1145/501158.501160}

\end{thebibliography}

\newpage

\appendix

\section{Truthfulness of \lap}
\label{sec:truthful-lap}
\begin{proofof}{Theorem~\ref{th:LAP_truthful}}
If $i$ is a bidder with $\alloci(\vals)>0$, then she always has an option either not to buy the item, or to pay $\lowc\le\vali$ whenever she gets the item in a lottery, i.e., $\pricei\le \alloci\cdot \vali$. Thus $\utili=\vali\cdot\alloci-\pricei\ge 0$ and $\lap$ is IR.  

Next we show that any LAP is DSIC. We assume that all bidders bid truthfully and argue that no bidder $i$ can improve their utility by bidding $\bidi\ne \vali$. 

We consider the bidders who dropped strictly before the last increment (either continuous or discrete from $\lowc$ to $\upc$) of the cut-off price $c$. Let $c'\ge \vali$ be the value of the cut-off price, when bidder $i$ has dropped. If $\bidi<c'$, then nothing changes for the bidder $i$, as $\alloci=0$. If $\bidi>c'$ and $i$ gets a different allocation $\alloci>0$, then $i$ has to pay at least $\pricei\ge c'\cdot\alloci$  which gives a non-positive utility to bidder $i$.

Next, consider the case when $\lap$ stops at the cut-off price $c'$ while continuously increasing $c$. Then only one bidder $i^*$ whose value $\vali[i^*]\ge c'$ may receive the item. Let $\upc<c'$ be the value of the cut-off price right after the last discrete jump of $c$ ($\upc=0$ if there were no discrete jumps). We have already discussed above that any bidder $i$ with 
$\vali<c'$ cannot get positive utility. The bidder $i\ne i^*$ with $\vali=c'$ drops the last from $S$. Les us consider her possible bids: (i) she would loose and get $\alloci=0$ by bidding $\bidi<\upc$ or $\bidi\in[\upc,c']$; (ii) she would have to pay $\pricei> c'\cdot \alloci =\vali\cdot\alloci$ by bidding any $\bidi>c'$. Consider all possible bids of $i^*$: (i) $i^*$ looses to $i$ by bidding $\bidi[i^*]<\upc$ or by dropping earlier at any earlier cut-off price $c$; (ii) $i^*$ gets the same take-it-or-leave-price  $p(\valsmi[i^*])$ with any bid $\bidi[i^*]\ge c'$. In summary, any bidder $i\in N$ cannot increase their utility by bidding $\bidi\ne\vali$ if $\lap$ stops while continuously increasing $c$.

We are only left to consider the case when $\lap$ stops (and potentially allocates the item) within a discrete increment of $c$ from $\lowc$ to $\upc>\lowc$.
We also only need to consider the bidders in $S(\lowc)$, since any other bidder drops before the last increment of $c$. 

Observe that any bidder $i\in S(\lowc)$ ($\vali\ge\lowc$) would loose a chance to win the item if they bid $\bidi < \lowc$, so such deviation results in $0$ utility for them. 

Next, let $i$ be any bidder with $\lowc\le\vali<\upc$. Any bid $\lowc\le\bidi<\upc$ leads to the same outcome and payment for $i$ as $\bidi=\vali$.

If $i$ bids $\bidi\ge \upc$, then either two bidders will get to the next stage $c>\upc$ (in this case $i$ will need to pay at least $\upc>\vali$), or only $i$ accepts the cut-off price $c=\upc$ (then $i$ would take the first option out of two, which will give her the same allocation of $\alloci=\frac{1}{|S(\lowc)|}$ and payment $\pricei=\frac{\lowc}{|S(\lowc)|}$ as if $\bidi=\vali$).
Finally, there could be at most one bidder $i^*$ with value $\vali[i^*]\ge\upc$. Note that bidding any $\bidi[i^*]\ge\upc$ results in the same choice of two options for $i^*$, while bidding below $\lowc$ results in $\alloci[i^*]=0$. Any bid $\lowc\le\bidi[i^*]<\upc$ leads to an allocation probability $\frac{1}{|S(\lowc)|}$, i.e., $i^*$ would be restricted to choosing the first option. 

We conclude that truthful bidding is a dominant strategy of every bidder.
\end{proofof}
\newpage
This project has received funding from the European Research Council (ERC) under the European Union's Horizon 2020 research and innovation program (grant agreement No. 866132), and by the Israel Science Foundation (grant number 317/17)

\end{document}